\documentclass[a4paper, 11pt]{article}
\RequirePackage{fullpage}

\usepackage{amsthm,amsmath,amssymb}
\usepackage{mathtools}
\usepackage{subcaption}
\usepackage[dvipsnames]{xcolor}
\usepackage[pdftex,breaklinks,colorlinks,
    citecolor={BlueViolet}, linkcolor={Blue},urlcolor=Maroon]{hyperref}
\usepackage{tikz,pgfkeys}
\usetikzlibrary{quotes}
\usepackage{charter,eulervm}
\usepackage{enumerate}
\usepackage[final,expansion=alltext,protrusion=true]{microtype}

\tikzset{
  filled vertex/.style  = {circle,draw=black,fill=violet,inner sep=1.5pt},
  empty vertex/.style  = {circle, draw, fill = white, inner sep=1.5pt, minimum width=1.5pt}  
}

\theoremstyle{plain} 
\newtheorem{theorem}{Theorem}[section]
\newtheorem{lemma}[theorem]{Lemma}
\newtheorem{corollary}[theorem]{Corollary}
\newtheorem{proposition}[theorem]{Proposition}
\newtheorem{observation}[theorem]{Observation}

\newcommand{\tauplus}{\overrightarrow\tau}  \newcommand{\mssc}{minimum sum set cover}  \newcommand{\msvc}{minimum sum vertex cover}

\title{Minimum Sum Set Cover: Structures and Algorithm}
\author{Zhongyi Zhang\thanks{School of Computer Science and Technology, Shandong University, Qingdao, China. \texttt{zhangzhongyi@mail.sdu.edu.cn}.}
  \and Yixin Cao\thanks{Department of Computing, Hong Kong Polytechnic University, Hong Kong, China.  \texttt{yixin.cao@polyu.edu.hk}.}
 }
\date{}

\begin{document}
\maketitle

\begin{abstract}
A set cover of a hypergraph $H$ is a set of vertices intersecting every hyperedge. In the minimum sum set cover problem, vertices are selected one by one; each edge pays the position of the first vertex that hits it, and the objective is to minimize the total cost.
When $H$ is a graph, this is the minimum sum vertex cover problem.
A solution is specified by a set cover $S$ together with an ordering of its vertices.  While the classical set cover problem seeks to minimize $|S|$, the minimum sum variant favors covering many edges early and may prefer larger covers.  This motivates a natural question: how large can the gap between~$\overrightarrow{\tau}$ and $\tau$ be?

We prove an upper bound $\overrightarrow{\tau} \le \tau \log_{2} \lvert E(H)\rvert$, and show that for any positive~$n$, there exists a hypergraph $H$ on $n + 3$ vertices with $\tau=3$ and $\overrightarrow{\tau}=n$.
For graphs, we obtain stronger bounds: we prove~$\overrightarrow{\tau} \le 2\tau \log_{2} \tau$, improving the bound of Liu et al.\ [Theor. Comput. Sci., 2025], and we construct graphs with~$\overrightarrow{\tau} = \Omega\left( \frac{\tau \log \tau}{\log\log \tau}\right)$, nearly matching this upper bound.

On the algorithmic side, we show that minimum sum set cover is fixed-parameter tractable on bounded-rank hypergraphs, parameterized by~$\overrightarrow{\tau}$, extending the algorithm of Liu et al.\ for graphs (i.e., rank-two hypergraphs).
\end{abstract}

\section{Introduction}

A \emph{hypergraph}~$H$ consists of a vertex set~$V(H)$ and hyperedge set~$E(H)$, where each hyperedge is a nonempty subset of~$V(H)$.
The \emph{rank}~$r(H)$ of a hypergraph~$H$ is the maximum cardinality of any of the edges in~$H$.
A \emph{set cover} of a hypergraph is a set of vertices that intersects every hyperedge.\footnote{In the literature, a set of vertices that covers all hyperedges is typically referred to as a \emph{hitting set}, whereas a set cover is a collection of hyperedges whose union encompasses the entire vertex set. These concepts are equivalent because a set cover of a hypergraph corresponds to a hitting set of its dual hypergraph, where the roles of vertices and edges are interchanged. We follow this established convention, as introduced by Feige et al.~\cite{feige-04-minimum-sum-vertex-cover}.}
There are many variations of this fundamental concept.
A solution consists of an ordered set cover.  When the~$i$th chosen vertex is processed, every previously uncovered hyperedge containing it becomes covered and contributes cost~$i$.
A \emph{\mssc{}} of a hypergraph is a set cover associated with a permutation of its vertices that minimizes the total cost of covering all hyperedges.
When the hypergraph is a graph (rank-two hypergraph), we have \emph{vertex covers} and \emph{\msvc{s}}.

\newcommand{\bset}[1]{{\color{blue} \{#1\}}}
\newcommand{\gset}[1]{{\color{ForestGreen} \{#1\}}}

The classical set cover problem asks for a set cover of minimum cardinality.
In contrast, the minimum-sum variant rewards covering many edges early.
Consequently, a minimum sum set cover does not need to be minimum, or even minimal, as a set cover.
For example, in the hypergraph on vertex set~$\{1, \ldots, 7\}$ and the following eleven hyperedges (blue hyperedges contain vertex~$2$, and green ones contain vertex~$3$)
\[
  \{1, {\color{blue}2}, {\color{ForestGreen}3}\},
  \bset{1, 2, 4}, \bset{1, 2, 5}, \bset{1, 2, 6}, \bset{1, 2, 7},
  \gset{1, 3, 4}, \gset{1, 3, 5}, \gset{1, 3, 6}, \gset{1, 3, 7},
\bset{2, 4, 5}, \gset{3, 6, 7},
\]
the only minimum set cover is $\{2,3\}$, while the only minimum sum set cover is~$\langle 1, 2, 3\rangle$, with cost $9 + 2 + 3 = 14$.
The first vertex~$1$ is not needed to achieve a minimum set cover, but it covers many edges early and is therefore beneficial under the sum objective.

Another peculiarity of minimum sum set cover is that isomorphic components of a (hyper)graph does not need to be treated symmetrically by an optimal solution; see Figure~\ref{fig:disjoint}.

\begin{figure}[ht!]
  \centering
  \begin{tikzpicture}
    \foreach[count=\a from 0] \x/\p in {u/left, v/right} {
      \begin{scope}[xshift={6cm*\a}]
        \def\n{4}
        \node[filled vertex, "$\x_{0}$"] (c) at (\n/2+.5, 2) {};
        \foreach \i in {1, ..., \n} {
          \draw (c) -- (\i, 1) node[filled vertex, "$\x_{\i}$" \p] (v\i) {};
          \foreach \d in {-1, 1} 
          \draw (v\i) -- ++(\d/5, -.7) node[empty vertex] {};
        }
        \foreach \i in {0, \n+1} {
          \draw (c) -- ({\i}, 1) node[empty vertex] {};
        }
      \end{scope}
    }
  \end{tikzpicture}
  \caption{A graph with two identical components.
    All minimum sum vertex covers of this graph are of the form~$\langle u_{0}, v_{1}, v_{2}, v_{3}, v_{4}, u_{1}, u_{2}, u_{3}, u_{4}, v_{0} \rangle$, with the sequence of effective coverages~$\langle 6, 3, 3, 3, 3, 2, 2, 2, 2, 2 \rangle$.  For the left component alone, the only optimal solution is of the form~$\langle u_{1}, u_{2}, u_{3}, u_{4}, u_{0} \rangle$.
  }
  \label{fig:disjoint}
\end{figure}

Let $\tau(H)$ denote the size of a minimum (classical) set cover of $H$.
Since a hypergraph may admit multiple optimal minimum sum set covers of varying sizes, we define~$\tauplus(H)$ to be the maximum cardinality of a set cover appearing in an optimal solution.
The example above shows that $\tauplus(H)$ can be strictly larger than $\tau(H)$.
This raises a basic structural question:

\begin{quote}
\emph{What is the possible gap between $\tauplus(H)$ and $\tau(H)$?}
\end{quote}

This question is nontrivial even for graphs, where the ordering objective can force optimal solutions to use substantially more vertices than a minimum vertex cover.
Liu et al.~\cite{cao-25-msvc} showed that for all graphs~$G$, $\tauplus(G) = O(\tau(G)^{1.5}).$
We significantly improve this upper bound.  For hypergraphs, the upper bound has to rely on~$|E(H)|$ as well.

\begin{theorem}\label{thm:mssc-rank}
  \begin{itemize}
  \item  For any graph~$G$ with~$\tau(G) \ge 2$, it holds~$\tauplus(G) \le 2\tau(G) \log_{2} \tau(G)$.
  \item  For any hypergraph~$H$ with~$\tau(H) \ge 2$, it holds~$\tauplus(H) < \tau(H) \log_{2} |E(H)| \le r(H)\tau(H) \log_{2} |V(H)|$.
  \end{itemize}
\end{theorem}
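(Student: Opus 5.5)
The plan is an exchange argument against an optimal ordering $\langle v_1,\dots,v_k\rangle$ with $k=\tauplus(H)$. Let $c_j$ be the effective coverage of $v_j$, i.e.\ the number of hyperedges first covered at step $j$. Let $m_i$ be the number of hyperedges still uncovered after the first $i$ steps, so $m_0=|E(H)|$, $m_k=0$, and $m_i\ge 1$ for $i<k$. The last inequality holds because in an optimal solution every chosen vertex has $c_j\ge 1$; otherwise deleting $v_j$ lowers the cost.

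\textbf{Key lemma (halving).} For every $i$ with $i+\tau\le k$, I want $m_{i+\tau}\le m_i/2$. Fix a minimum set cover $S$ with $|S|=\tau$, and compare the optimal suffix after position $i$ with a replacement. The replacement keeps $v_1,\dots,v_i$ and then appends the vertices of $S$ in a uniformly random order; vertices of $S$ already among $v_1,\dots,v_i$ only help. Each of the $m_i$ remaining hyperedges meets $S$, so its expected position within the random order is at most $(\tau+1)/2$. Hence some order of $S$ has extra cost, beyond the common $i\cdot m_i$, at most $\frac{\tau+1}{2}m_i$. The optimal suffix has extra cost $\sum_{j>i}(j-i)c_j$. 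Every remaining hyperedge pays at least $1$, and those still uncovered after $i+\tau$ pay at least $\tau+1$, so this is at least $m_i+\tau\, m_{i+\tau}$. Optimality gives $m_i+\tau m_{i+\tau}\le\frac{\tau+1}{2}m_i$, i.e.\ $m_{i+\tau}\le\frac{\tau-1}{2\tau}m_i<m_i/2$.

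\textbf{Hypergraph bound.} Iterating the lemma gives $m_{t\tau}<|E(H)|/2^t$. Since $m_i\ge1$ for every $i<k$, all positions up to $k-1$ lie below $\tau\log_2|E(H)|$. Tightening the last block with the strict inequality then yields $k<\tau\log_2|E(H)|$; this uses $\tau\ge2$ to absorb the boundary terms. The second inequality follows because $H$ has a cover of size $\tau$, so $|E(H)|\le\tau\cdot|V(H)|^{r-1}$. Together with $\tau\le|V(H)|$ this gives $\log_2|E(H)|\le r\log_2|V(H)|$.

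\textbf{Graph bound.} This is where I expect the main difficulty. The halving lemma alone gives $\tau\log_2|E|$, and $|E|$ can be far larger than any polynomial in $\tau$ (stars). So I would first show that the remaining instance quickly shrinks to $\mathrm{poly}(\tau)$ edges.

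First, a swap argument on adjacent positions shows the sequence $c_j$ is nonincreasing in an optimal solution. Second, any chosen vertex $v\notin S$ has all its still-uncovered edges going to $S$, so $c_j\le\tau$.

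The target is then a position $i_0\le\tau\log_2\tau$ (or smaller) after which $m_{i_0}\le\tau^2$. Applying the halving lemma from $i_0$ costs only another $\tau\log_2\tau^2=2\tau\log_2\tau$ steps. Coordinating the two phases is what should yield exactly $2\tau\log_2\tau$.

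The hard part is bounding the remaining maximum degree. I would try another exchange step: if some uncovered vertex of $S$ has current degree much larger than $c_{i+1}$, moving it to position $i+1$ strictly improves the cost. That would force remaining degrees to be $O(\tau)$ once the $c_j$ drop below $\tau$, and hence $m_i\le\tau\cdot O(\tau)$. If this direct exchange fails, I would instead sharpen the halving lemma for graphs. The idea is to charge separately the edges with both ends outside the already-chosen prefix, using that each such edge has at most two vertices, so a random order of $S$ gives expected position about $\tau/3$ for edges with both endpoints in $S$.
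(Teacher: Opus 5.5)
Your graph argument has a genuine gap. You name the right target (reach a point after which only $O(\tau^2)$ edges remain) and two correct facts: the coverages $c_j$ are nonincreasing, and $c_j\le\tau$ whenever $v_j\notin S$. But the step meant to reach the target, moving a high-degree vertex of $S$ forward, is neither quantified nor proved. The fallback of sharpening the halving constant cannot work either, because any bound of the form $C\tau\log_2|E(G)|$ stays unbounded in terms of $\tau$. Moreover, even granting a point $i_0\le\tau$ with $m_{i_0}\le\tau^2$, adding $i_0$ to $\tau\log_2\tau^2$ overshoots $2\tau\log_2\tau$, because you run the second phase with the original $\tau$.

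Two ideas are missing.
(a) The per-step form of your exchange: in an optimal solution, $c_{i+1}\ge m_i/\tau$. This follows from your own comparison plus monotonicity. The extra suffix cost $\sum_{j>i}(j-i)c_j$ equals $\sum_{\ell\ge0}m_{i+\ell}$. If $c_{i+1}<m_i/\tau$, then $m_{i+\ell}>(1-\ell/\tau)m_i$ for $1\le\ell\le\tau$, so this sum exceeds $\frac{\tau+1}{2}m_i$, contradicting your random-order bound.
(b) Fix a minimum vertex cover $S^*$ and let $v_1,\dots,v_j$ be the maximal prefix of the optimal ordering inside $S^*$. The residual graph $G'$ is covered by $S^*\setminus\{v_1,\dots,v_j\}$, so $\tau(G')\le\tau-j$. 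Since $v_{j+1}\notin S^*$, all its residual neighbours lie in $S^*$, so $c_{j+1}\le\tau$, and (a) gives $m_j\le\tau\,c_{j+1}\le\tau^2$. The suffix is optimal for $G'$, so the hypergraph bound applied to $G'$ contributes at most $\tau(G')\log_2\tau^2\le2(\tau-j)\log_2\tau$ further vertices (the cases $\tau(G')\le1$ are trivial). Finally, $j+2(\tau-j)\log_2\tau\le2\tau\log_2\tau$.

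The drop of the residual cover number along the $S^*$-prefix is exactly the ``coordination'' you were missing, and no bound on remaining degrees is needed. This is how the paper proceeds.

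For hypergraphs your route differs from the paper's and is sound. The paper proves the first-vertex bound (a) via monotone coverages and Proposition~\ref{lem:bettersort}, applies it to each residual hypergraph to get $m_{i+1}\le(1-1/\tau)m_i$, and reads off $\tauplus\le(\tau-1)\log_2|E(H)|+1$ at once. Your random-order comparison gives the block decay $m_{i+\tau}\le\frac{\tau-1}{2\tau}m_i$ without monotonicity, but it controls only every $\tau$-th position. Iterating therefore yields just $\lfloor(k-1)/\tau\rfloor<\log_2 m$, i.e.\ $k<\tau(\log_2 m+1)$; your claim that all positions up to $k-1$ lie below $\tau\log_2 m$ does not yet follow.

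To close this, write $k-1=t\tau+s$ with $0\le s<\tau$ and use $m_{t\tau}\ge s+1$ (each of the last $s+1$ chosen vertices covers an edge). This gives $m\ge\bigl(\frac{2\tau}{\tau-1}\bigr)^t(s+1)$. Combined with $\tau\log_2\frac{\tau}{\tau-1}\ge\log_2 e$, this yields $k<\tau\log_2 m$, except when $\tau=m=2$ (two disjoint edges). There $\tauplus=\tau\log_2|E(H)|$, a degenerate case in which the paper's final step $(\tau-1)\log_2 m+1<\tau\log_2 m$ fails too. Your justification of $\log_2|E(H)|\le r\log_2|V(H)|$ is fine.
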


The main observation behind the proof is that at any step, the remaining sub-hypergraph admits a set cover of size at most $\tau(H)$.  Consequently, the vertex maximizing immediate coverage must remove at least a $1/\tau(H)$ fraction of the remaining edges.  This implies that the number of uncovered edges decays exponentially, by a factor of $(1- 1/\tau(H))$ at each step.  
For graphs~$G$, this implies that~$|E(G)|\le \tau(G)^2$ after removing the vertices contained in every minimum vertex cover and every minimum sum vertex cover.

We show by construction that the~$|E(H)|$ term for hypergraphs is inevitable.
Our construction is based on encoding an $n$-vertex graph $G$ into a hypergraph $H_G$ with three ``special'' vertices that form the only minimum set cover, while~$V(G)$ forms the only other minimal set cover.
The minimum sum objective forces almost all vertices in~$V(G)$ to be used before those special vertices.

\begin{theorem}\label{thm:main-1}
  For any integer~$n \ge 3$, there exists a hypergraph~$H$ such that~$\tau(H) = 3$ and~$\tauplus(H) = n$.  \end{theorem}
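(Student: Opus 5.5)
The plan is an explicit construction on $n+3$ vertices: three special vertices $a,b,c$ and $n$ regular vertices $[n]=\{1,\dots,n\}$. The hyperedges are all sets $\{s\}\cup T$ with $s\in\{a,b,c\}$ and $\emptyset\ne T\subseteq[n]$, so $|E(H)|=3(2^n-1)$. Every hyperedge contains exactly one special vertex. The key sub-family is the $2$-element edges $\{s,u\}$, which form a complete bipartite $K_{3,n}$. The many larger edges are added to make regular vertices attractive under the sum objective. (If this family turns out not to work, I would change the family of sets $T$ but keep the same three-step proof.)

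\textbf{Step 1: $\tau(H)=3$.} The set $\{a,b,c\}$ is a set cover. Suppose a cover $C$ misses some special vertex $s$. To hit every edge $\{s,u\}$, $C$ must contain all of $[n]$. Symmetrically, if $C$ misses some regular vertex $u$, it must contain all of $a,b,c$. Hence the only minimal set covers are $\{a,b,c\}$ and $[n]$. With $n\ge 3$ this gives $\tau(H)=3$. It also shows that every set cover contains $\{a,b,c\}$ or $[n]$.

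\textbf{Step 2: the ordering $\langle 1,2,\dots,n\rangle$ and the greedy comparison.} After regular vertices $1,\dots,k$ have been chosen, the uncovered edges are exactly $\{s\}\cup T$ with $\emptyset\ne T\subseteq R$, where $R=\{k+1,\dots,n\}$ and $|R|=n-k$. A further regular vertex in $R$ covers $3\cdot 2^{n-k-1}$ new edges. A special vertex covers only $2^{n-k}-1$ new edges. Thus regular vertices are strictly better at every step, and this remains true for $k=n-1$, where the comparison is $3$ versus $1$.

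More generally, I would compute the uncovered family after any prefix containing $k$ regular vertices and a set $S'$ of special vertices. It is $\{s\}\cup T$ with $s\notin S'$ and $\emptyset\ne T\subseteq R$, which has $(3-|S'|)(2^{|R|}-1)$ edges. Picking a regular vertex covers $(3-|S'|)2^{|R|-1}$ of them; picking a special vertex covers $2^{|R|}-1$. Once $|S'|\ge 1$ the comparison can flip, so a pure greedy argument is not enough. This is the main obstacle.

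\textbf{Step 3: optimality by an exchange argument.} I expect the hardest part to be proving that every optimal solution is exactly a permutation of $[n]$, so that $\tauplus(H)=n$. My approach is as follows. Take an optimal sequence whose first special vertex $s$ appears at position $p$, when $m\ge1$ regular vertices remain unchosen; if none remained, the last chosen vertex would be redundant, which contradicts optimality. Using the closed-form counts above, I would compare the cost of the optimal continuation from that state with the cost of the continuation in which $s$ is postponed and the $m$ remaining regular vertices are taken first.

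The tail cost of the all-regular continuation is
\[
\sum_{j=1}^{m} j\cdot 3\cdot 2^{m-j} = 3\,(2^{m+1}-m-2).
\]
Any continuation that includes the three specials and up to $m-1$ regulars can be computed by the formula for $(3-|S'|)(2^{|R|}-1)$. I would minimise that cost over the number of regulars interleaved and check that it is strictly larger. An induction on $m$ should suffice, with the $m=1$ case being $3$ versus $1+2+3$.

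If the inequality fails for small $m$ because two specials become cheap, I would reinforce the construction, for example by adding edges $\{s\}\cup T$ with multiplicity-simulating padding, or by restricting to $|T|\ge 1$ with extra weight on large $T$. The proof outline would stay the same. Finally, since $\langle1,\dots,n\rangle$ has size $n$ and is the only type of optimal solution, we get $\tauplus(H)=n$.
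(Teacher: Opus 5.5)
Your hypergraph is exactly the paper's $H_G$ with $G=K_n$, since every nonempty $T\subseteq[n]$ is used. So no reinforcement is needed. Step~1 and the coverage counts of Step~2 are correct; they are Proposition~\ref{lem:construction-msc} and the computation inside Proposition~\ref{lem:a-to-b}.

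The gap is Step~3, which is the whole substance of the theorem. You never establish the decisive inequality, and you explicitly allow that it may fail. The plan as sketched would also not close, for two reasons.

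First, your family of competitors is incomplete. After the first special $s$ is chosen, the continuation may take zero or one further special and then all $m$ remaining regular vertices, so it need not contain all three specials. Already for $m=1$ the cheapest competitor is $\langle s,u\rangle$ with cost $1+2\cdot 2=5$, not $1+2+3$.

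Second, an induction on $m$ for the claim ``with $m$ regulars and three specials left, taking all regulars first is optimal'' does not propagate. After a special is taken you are in the state with $m$ regulars and two specials. The hypothesis says nothing about that state, and there the greedy comparison really does flip. For $m=3$, taking the two specials costs $7+2\cdot 7=21$, whereas any continuation starting with a regular costs at least $8+2\cdot 4+3\cdot 2=22$.

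What is missing is a lower bound on every ordering that begins with a special vertex. The paper gets it by majorization. If an optimal $\pi$ begins with $b_1$, then $r_\pi(1)=r_{\pi_B}(1)$, and Lemma~\ref{lem:fix position} gives $r_\pi(2),r_\pi(3)\le r_{\pi_B}(1)$. Proposition~\ref{lem:bettersort} with $t=3$ then shows that $\pi$ costs at least as much as $\pi_B$. Here that cost is $6(2^m-1)$, which exceeds the all-regular cost $3(2^{m+1}-m-2)$ by $3m$. A regular first vertex is handled by induction via Observation~\ref{lem:remaining}, since deleting a regular vertex gives the same construction on one fewer regular vertex.

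In your setting even a crude bound works. After a special is taken first, no vertex covers more than $2^m$ new edges. So such an ordering costs at least $(2^m-1)+2\cdot 2^m+3(2^m-2)=6\cdot 2^m-7$, which is larger than $6\cdot 2^m-3m-6$ for every $m\ge 1$. Apply this to the residual hypergraph at the position of the first special, using Observation~\ref{lem:remaining}. This shows that every optimal solution has $S_\sigma=[n]$, hence $\tauplus(H)=n$.
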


For graphs, we are only able to furnish an almost tight construction matching the upper bound in Theorem~\ref{thm:main-1}, leaving a gap of~$\log\log \tau(G)$.

\begin{theorem}\label{thm:msvc}
  There are bipartite graphs~$G$ such that~$\tauplus(G) = \Omega\left( \frac{\tau(G) \log \tau(G)}{\log\log \tau(G)}\right)$. 
\end{theorem}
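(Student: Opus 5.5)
The plan is an explicit layered bipartite construction, modelled on the classical lower-bound instances for greedy set cover. Fix an integer $t\ge 2$, a ``core'' side $A$ and a ``decoy'' side $B=B_1\cup\dots\cup B_t$.

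\textbf{Construction.} Give every vertex of $A$ enough pendant leaves that every vertex cover either contains all of $A$ or is much larger. This makes $A$ a minimum vertex cover, so $\tau(G)=|A|$. Split $A$ into blocks at $t$ nested scales, with $|A| = t^{t}$ up to constant factors. Each vertex of $B_j$ is joined to one block of $A$ at scale $j$. Choose the multiplicities so that:
\begin{itemize}
\item $|B_j|\approx |A|$ for each $j$;
\item every vertex of $A$ receives exactly one neighbour in each $B_j$;
\item the degree of a $B_j$-vertex strictly exceeds the residual degree of any $A$-vertex at the moment $B_j$ is processed.
\end{itemize}
Concretely, I would let degrees in $B_j$ decrease like $t-j+c$ while the pendant counts stay just below. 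Then $\tauplus(G)\ge |B|\approx t\,|A|$. Since $t=\Theta(\log\tau/\log\log\tau)$ when $\tau=t^{t}$, this gives the bound in Theorem~\ref{thm:msvc}. The graph is bipartite by construction, since leaves and $B$ attach only to $A$.

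\textbf{Optimality, via local exchanges.} I would first prove the standard fact that in an optimal ordering, swapping two consecutive vertices cannot decrease the cost. Equivalently, the effective coverages satisfy an exchange inequality, namely the ``effective coverage sequence'' property illustrated in Figure~\ref{fig:disjoint}. I would also use the main observation behind Theorem~\ref{thm:mssc-rank}: the residual graph always has a cover of size at most $\tau$.

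With these tools, argue level by level that, while some $B_j$-vertex still has its full degree, no $A$-vertex has larger residual degree. Hence any optimal solution can be transformed, without increasing the cost, so that all of $B_1$ is chosen first, then $B_2$, and so on, with $A$ chosen last. To get the statement about $\tauplus$ (the maximum-size optimal cover), it suffices to exhibit an optimal solution that uses nearly all of $B$. So I need to show this layered order is optimal, with strict inequalities where ties could otherwise shrink the cover.

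\textbf{Main obstacle.} The hard step is the global optimality argument. Greedy choices alone are not optimal for minimum sum set cover, as the two-component example shows. So I must rule out orders that pick some $A$-vertex early, which kills one edge in every level at once and lowers the later $B$-degrees. I would try a potential argument first: compare the cost of any order against the layered order by charging each edge of $A\times B_j$ to its level. I would then tune the gap between consecutive degree levels, which is where the $\log\log$ loss enters, so that taking an $A$-vertex before its level is finished costs more than it saves.
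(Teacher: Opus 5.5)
There are two genuine gaps. Your scaling matches the paper's graph $B_n$: pendant leaves make $A$ the unique minimum cover, there are about $\log\tau/\log\log\tau$ nested levels, $|A|\approx t^t$, and each level has size about $|A|$. But the degree scheme is wrong, and the optimality step cannot work.

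\textbf{The degree scheme.} If each vertex of $A$ has exactly one neighbour in $B_j$, then $B_j$ carries only $|A|$ edges. So $|B_j|\approx|A|$ forces $B_j$-degrees near $1$, which cannot beat the residual degree of $A$-vertices. With your concrete degrees $t-j+c$ you get $|B_j|=|A|/(t-j+c)$ and $|B|=\Theta(|A|\log t)$. That is only $O(\tau\log\log\tau)$, the harmonic (Johnson-type) bound. To keep $|B_j|\approx|A|$ at every level with $B_j$-vertices still dominating, the per-level degrees must grow geometrically. In the paper each $R_i$-vertex has degree $2n^i$, and each $L$-vertex has $n^i$ neighbours in $R_i$. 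When $R_i$ is reached, an $L$-vertex has residual degree $(n^{i+1}-1)/(n-1)<2n^i$. This is what forces $|L|=2n^q$.

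\textbf{The layered order is not optimal.} For the nested construction (which your scheme must become once repaired), no degree-dominance or exchange argument can prove the layered order optimal, because it is not. After the coarse levels are removed, the graph is a disjoint union of many copies. With enough copies, it is strictly cheaper to finish some of them through their $A$-side.

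Concretely, take the paper's $B_1$ with $n=3$ (so $|L|=6$, $|R_1|=3$, $24$ edges) and three copies. Taking all nine $R_1$-vertices first costs $6\cdot 45+\sum_{i=10}^{27}i=603$. Taking $R_1$ of two copies, then $L$ of the third, then the matching edges costs $6\cdot 21+4\cdot 57+\sum_{i=13}^{24}i=576$. Since $B_2-R_2\cong 3B_1$, the layered order of $B_2$ is already suboptimal. Likewise, your claim $\tauplus\ge|B|$ is false: only a constant fraction of $B$ is used.

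\textbf{What the paper proves instead.} It proves only an aggregate statement.
\begin{enumerate}
\item Each copy begins with all of $R^i_q$ or all of $L^i$ (Lemma~\ref{lem:main}, via Corollary~\ref{cor:switching} and a rotation argument).
\item $\phi(G_{p,q})$ lies between $p^2n^{3q-1}(n+2)$ and $p^2n^{3q-1}(n+6)$.
\item A cost comparison shows fewer than $5p/n$ copies take $L$ first, so $\tauplus(G_{p,q})>pn^q+\tauplus(G_{p(n-5),q-1})$.
\item Telescoping gives $\tauplus(B_n)\ge n^n\sum_{j=0}^{n-3}(1-5/n)^j=\Omega(n^{n+1})$.
\end{enumerate}
This is where the $\log\log$ factor is actually lost. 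The ratio between levels must be about the number of levels, so that a defection rate of $O(1/n)$ per level compounds to a constant. You therefore need a counting bound on defecting copies, not a proof that early $A$-choices never pay off.
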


For the lower bound, we use a connected bipartite graph~$B_{q}$ inspired by the seminal construction of Johnson~\cite{johnson-74-approximation} for the degree-greedy algorithm for the vertex cover problem.
We start with a matching of~$2 n^q$ edges on~$L$ and~$R_{0}$.
The set~$L$ is the only minimum vertex cover of the graph.
The other side~$R$ has~$q$ more groups of vertices, $R_{1}, \ldots, R_{q}$, each of~$n^q$ vertices.  They are connected to~$L$ in a way that with the removal of the last group, the graph is decomposed into~$n$ components, each isomorphic to~$B_{q-1}$.

We show that any minimum sum vertex cover proceeds through the graph in a highly constrained way.
In particular, it starts with all the vertices in~$R_{q}$, whose removal leaves $n$ copies of $B_{q-1}$.
For each of them, we either take the vertices in the~$L$ side, or the last group of it (which is a subset of $R_{q-1}$).  
The same pattern must repeat in the copies~$B_{q-2}, \ldots, B_{2}$, leading to a lower bound $\tauplus(B_q) = \Omega(n^{q+1})$.
The claimed bound follows from these by setting~$q=n$.

The approximation and complexity of the minimum sum set cover problem were initiated by Feige et al.~\cite{feige-04-minimum-sum-vertex-cover}, who studied approximation algorithms and hardness for the general problem.
Their reduction implies that it is NP-hard to compute~$\tauplus$.
Recently, Liu et al.~\cite{cao-25-msvc} complemented this by showing that the problem remains NP-hard even when the vertex set of a minimum sum set cover is provided.

We study the parameterized complexity of the minimum sum set cover problem.
Formally, we consider the following decision problem. Given a hypergraph $H$ and integers $k$ and $w$, does there exist a set cover~$S$ of~$H$ and a permutation of~$S$ such that~$|S| \le k$ and the cost is at most~$w$.
Parameterizing by~$w$ alone is not meaningful, since every yes-instance satisfies~$w \ge |E(H)|$.
The more natural parameterization is to use~$\tauplus$.
Liu et al.~\cite{cao-25-msvc} showed that the minimum sum vertex cover problem, minimum sum set cover on rank-two hypergraphs, is fixed-parameter tractable (FPT) with respect to~$\tauplus$.

The algorithm of Liu et al.~\cite{cao-25-msvc} finds the solution set~$S$ by extending a minimal vertex cover~$S'$, which can be found in FPT time.
The most crucial observation behind their low complexity is that for each of the remaining positions of~$S\setminus S'$, one only needs to consider a small number of candidates: since~$S'$ is a vertex cover, the number of edges effectively covered by any vertex in~$S\setminus S'$ depends solely on its position relative to the sequence of~$S'$.
Heavily relying on the fact that each edge involves precisely two vertices, the algorithm cannot be easily adapted to hypergraphs.

We present a simple branching algorithm for the \mssc{} problem on bounded-rank hypergraphs, first using sunflowers to identify a bounded set of candidate vertices, and then exhaustively searching over subsets and permutations of these candidates.
The algorithm iteratively applies the Sunflower Lemma~\cite{erdos-60-sunflower} to the hypergraph induced by the uncovered edges (i.e., $H - S$): as long as this hypergraph is large, we find a sunflower with more than $k - |S|$ petals and branch on the (at most $r(H)$) vertices in its core, since any set cover of size at most $k$ must contain at least one of them. Once no such sunflower exists, the lemma implies that the remaining instance has size bounded as a function of the parameter, and we can exhaustively enumerate all ways to complete the solution set to size at most $k$ and all permutations of this set, selecting one of minimum sum cost.

\begin{theorem}\label{thm:alg}
  The minimum sum set cover problem can be solved in time
  \[
    O\left( r(H)^{r(H)+k} k^{r(H)} |E(H)| + (r(H) k)^{r(H) k} \right).
  \]
\end{theorem}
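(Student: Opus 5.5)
We follow the outline given before the statement. Let $k$ bound the size of the sought set cover, with $r = r(H)$. The algorithm is a bounded search tree that maintains a partial set $S$ (initially empty) of vertices forced into the solution set. Each node of the tree considers the hypergraph $H - S$ of edges not hit by $S$; this has rank at most $r$, and each edge may be viewed as a set of at most $r$ vertices. If $|S| > k$ or some edge of $H-S$ is empty, the branch is discarded.

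\textbf{Branching step.} Suppose $H - S$ has more than $r!\,(k-|S|)^{r}$ edges. By the Sunflower Lemma~\cite{erdos-60-sunflower}, it then contains a sunflower with more than $k-|S|$ petals and core $C$. The core is nonempty: if $C=\emptyset$, the petals would be pairwise disjoint edges, and more than $k-|S|$ of them cannot be hit by $k-|S|$ additional vertices, so the branch can be discarded anyway. Any set cover $S^\ast \supseteq S$ with $|S^\ast| \le k$ must hit all petals. Since the petals minus $C$ are disjoint and number more than $k-|S|$, the set $S^\ast$ must intersect $C$. We therefore branch on each $v \in C$ with $S := S \cup \{v\}$.

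\textbf{Correctness of restricting to covers containing the branch.} The problem asks for the optimum over all covers of size at most $k$ and all orderings. Every such cover contains $S$ along some branch, because each branching step is exhaustive for covers of size at most $k$. Hence it suffices, at each leaf, to optimize over covers $S^\ast \supseteq S$.

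\textbf{Search tree and cost per node.} The tree has depth at most $k$ and branching factor at most $r$, so there are at most $r^{k}$ leaves. A sunflower can be found greedily following the inductive proof of the lemma, in time $O(r^{r} k^{r}\cdot |E(H)|)$ or similar. I expect the main obstacle to be the bookkeeping that matches the first term $r^{r+k}k^{r}|E(H)|$ exactly, so the per-node cost must be charged carefully to fit it.

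\textbf{Leaves.} At a leaf, $H-S$ has at most $r!\,k^{r}$ edges, so the vertex set $U$ touched by these edges has $|U| \le r\cdot r!\,k^{r}$. Any optimal completion may use only vertices of $S \cup U$ that cover at least one edge. In fact, we need to restrict further. Consider an optimal solution of size at most $k$ containing $S$. Any vertex outside $S\cup U$ covers nothing beyond what $S$ covers, so removing it never increases the cost.

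\textbf{Enumeration at leaves.} We need to enumerate orderings of at most $k$ vertices chosen from $S \cup U$. To get $(rk)^{rk}$, I would instead argue that the minimal-size branching can be continued: keep branching on cores until $S$ itself is a set cover. A simpler variant is to branch on the vertices of any uncovered edge, giving at most $r$ choices, until covered. This yields at most $r^{k}$ candidate sets, and the union of all of them has at most $rk$ vertices relevant to a given path.

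Every optimal solution of size at most $k$ uses only vertices from some such branch path's candidate pool of size at most $rk$. We then try all sequences of at most $k$ vertices from this pool, at most $(rk)^{k}$ per pool. We evaluate each sequence's cost in time polynomial in $|E|$, and the total is dominated by $(rk)^{rk}$ after combining with the bounded edge count.

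Finally, take the minimum cost found over all leaves.
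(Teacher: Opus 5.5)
Your sunflower branching is sound and coincides with steps~1--2 of the paper's algorithm. The gap is in the leaf step, which rests on a false claim: that a vertex outside $S\cup U$ ``covers nothing beyond what $S$ covers, so removing it never increases the cost.'' Under the sum objective such a vertex can be placed before the vertices of $S$ and cover their edges earlier. This is the same phenomenon as vertex~$1$ in the paper's introductory example.

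Concretely, let $k=r=3$ and take the hyperedges $\{c_1,a_i\}$ and $\{c_2,b_i\}$ for $i\le 200$, and $\{x,c_1,p_j\}$ and $\{x,c_2,q_j\}$ for $j\le 500$. Every cover of size at most $3$ contains $c_1$ and $c_2$. The sunflowers $\{c_1,a_1\},\ldots,\{c_1,a_4\}$ and then $\{c_2,b_1\},\{c_2,b_2\},\{c_2,b_3\}$ are legitimate outputs of your branching step. After them, $S=\{c_1,c_2\}$ and $U=\emptyset$. The best ordering of $\{c_1,c_2\}$ costs $700+2\cdot 700=2100$, whereas $\langle x,c_1,c_2\rangle$ costs $1000+2\cdot 200+3\cdot 200=2000$. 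So with $w=2000$ you answer ``no'' on a yes-instance.

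Your fallback ``candidate pool'' argument fails for the same reason. Branching on uncovered edges (or on cores) only forces $S^{\ast}$ to meet each branched set. It says nothing about vertices of $S^{\ast}$ that are redundant for covering but profitable in the ordering. In the example, branching on $\{c_1,a_1\}$ and then $\{c_2,b_1\}$ yields a pool that misses $x$. Optimal min-sum covers need not be minimal covers, which is the whole point of $\overrightarrow{\tau}$ exceeding $\tau$.

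Following the paper more closely would not repair this. Its step~3 deletes the isolated vertices of $H-S$, and step~4 guesses $T\subseteq V(H-S)$. That is exactly the same restriction, and the instance above defeats it with the same sunflower choices. Its correctness argument (``these can be found by enumeration'') gives no justification for discarding such vertices.

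What is missing in both your proposal and the paper is an argument that bounds the candidates for these redundant-but-early vertices. For graphs, Liu et al.\ obtain such a bound from the fact that the effective coverage of a vertex outside a minimal cover depends only on its position relative to that cover. Without an ingredient of this kind, more careful running-time bookkeeping cannot rescue the proof.
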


It remains open whether the problem is fixed-parameter tractable when the rank of the input hypergraph is unbounded.  See more discussion in Section~\ref{sec:conclusion}.

\paragraph{Related work.}

Feige et al.~\cite{feige-04-minimum-sum-vertex-cover} showed that a greedy algorithm approximates \mssc{} within a ratio of 4, and it becomes NP-hard to achieve a constant-approximation better than 4.
On graphs, the approximation ratio can be improved to 16/9~\cite{bansal-23-min-sum-set-cover}, while the lower bound, assuming the Unique Games Conjecture and P $\ne$ NP, is 1.014~\cite{stankovic-25-minimum-sum-vertex-cover}.
It remains to see whether these ratios can be improved if FPT time is allowed.

\section{Upper bounds: Theorem~\ref{thm:mssc-rank}}

We let $H - S$ denote the subhypergraph of $H$ induced by $V(H)\setminus S$, i.e., the hypergraph with vertex set $V(H)\setminus S$ and hyperedge set
\[
\{ e\setminus S\mid e\in E(H), e \cap S = \emptyset \},
\]
and it is further shortened to~$H - v$ when~$S$ consists of a single vertex~$v$.

The \mssc{} problem asks for a set cover~$S$ and a permutation of its vertices.  It is more convenient to extend this permutation into a bijection~$\sigma: V(H) \rightarrow \{1, 2, \ldots, |V(H)|\}$ by arbitrarily assigning vertices in~$V(H)\setminus S$ into $\{|S| + 1, \ldots, |V(H)|\}$.
Note that any ordering of $V(H)$ induces a feasible extension of an ordering of a set cover.
For an ordering~$\sigma: V(H) \rightarrow \{1, 2, \ldots, |V(H)|\}$, define the \emph{effective coverage} of the $i$th vertex as
\[
  r_{\sigma}(i) = \left| \{e\in E(H) \mid \min_{v\in e} \sigma(v) = i \} \right|.
\]
Abusing notation, we also write~$r_{\sigma}(\sigma(v))$ as~$r_{\sigma}(v)$.
The cost of \mssc{s} of~$H$, denoted as~$\phi(H)$, is
\[
  \phi(H) = \min_{\sigma} \sum_{i=1}^{|V(H)|} i\cdot r_{\sigma}(i) = \min_{\sigma} \sum_{e \in E(H)} \min_{v\in e} \sigma(v).
\]

Throughout the paper, we will consider solutions~$(S, \sigma)$ with~$r_{\sigma}(i) > 0$ for all~$i = 1, \dots, |S|$.
Indeed, if~$r_{\sigma}(|S|) = 0$, then the last vertex of $S$ is redundant; if~$r_{\sigma}(i) = 0$ for some~$i < |S|$, then the vertex at position $i$ does not cover any edge at that position; removing it and shifting all later vertices one step earlier yields another solution with strictly smaller cost.
Under this assumption, the set~$S$ is implicit from the ordering~$\sigma$: its vertices are exactly those that appear before any edge is first covered.  Equivalently,
\[
  |S| = \max_{e \in E(H)} \min_{v\in e} \sigma(v).
\]
Henceforth, we will simply use~$\sigma$ to denote a solution, and~$S_\sigma$ the implied vertex set.

\begin{observation}\label{lem:remaining}
  Let~$H$ be a hypergraph on~$n$ vertices, and~$\langle v_{1}, v_{2}, \ldots, v_{n}\rangle$ a \mssc{} of~$H$.  For all~$i = 1, \ldots, n - 1$, the ordering~$\langle v_{i}, v_{i + 1}, \ldots, v_{n}\rangle$ is a \mssc{} of~$H - \{v_{1}, v_{2}, \ldots, v_{i - 1}\}$.
\end{observation}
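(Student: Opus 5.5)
The plan is a standard exchange (cut-and-paste) argument: we split the cost of $\sigma=\langle v_1,\ldots,v_n\rangle$ into a part that every ordering with the same prefix pays and a part that depends only on how the suffix orders $H' := H - \{v_1,\ldots,v_{i-1}\}$. Let $E_1$ be the set of hyperedges of $H$ that intersect $\{v_1,\ldots,v_{i-1}\}$, and $E_2 = E(H)\setminus E_1$. By the definition of $H - S$, the hyperedges of $H'$ are exactly the sets $e$ for $e\in E_2$; removing the prefix changes nothing about them, since they avoid the prefix entirely. (If two edges of $E_2$ coincide they count separately; the argument treats $E(H')$ as a multiset, or just as the indexed family $E_2$.)

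\textbf{Step 1: decomposition of cost.} Consider any ordering $\pi$ of $V(H)$ whose first $i-1$ entries are $v_1,\ldots,v_{i-1}$. Each $e\in E_1$ pays $\min_{v\in e}\pi(v)\le i-1$, and this value is determined by the prefix alone. Each $e\in E_2$ contains no prefix vertex, so it pays
\[
(i-1)+\min_{v\in e}\pi'(v),
\]
where $\pi'$ is the induced ordering of $V(H')$, i.e.\ $\pi' = \pi - (i-1)$ restricted to $V(H')$. Summing over all edges,
\[
\mathrm{cost}_H(\pi) = C + (i-1)|E_2| + \mathrm{cost}_{H'}(\pi'),
\]
where $C$ is the total contribution of $E_1$ and depends only on the prefix.

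\textbf{Step 2: exchange.} Suppose $\langle v_i,\ldots,v_n\rangle$ were not a minimum sum set cover of $H'$. Then some ordering $\rho$ of $V(H')$ has strictly smaller cost in $H'$. Form the ordering $\langle v_1,\ldots,v_{i-1}\rangle$ followed by $\rho$. Its first two terms, $C$ and $(i-1)|E_2|$, are the same as for $\sigma$, and its third term is strictly smaller. Hence its total cost is strictly smaller than that of $\sigma$, contradicting the optimality of $\sigma$.

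\textbf{Main obstacle.} The only delicate point is reconciling this with the paper's convention that a solution is a set cover together with an ordering. Two things need checking. First, the statement is phrased via full orderings of $V(H)$, and the paper already notes that every ordering of $V(H)$ induces a feasible solution, so minimizing over full orderings is equivalent. Second, there is the degenerate case where $E_2=\emptyset$, in which any ordering of $H'$ is trivially optimal, so the claim holds vacuously. With these two checks in place, the argument is immediate.
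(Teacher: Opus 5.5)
Your proof is correct and is the same concatenation/exchange argument the paper uses; you just make explicit the decomposition $\mathrm{cost}_H(\pi) = C + (i-1)|E_2| + \mathrm{cost}_{H'}(\pi')$ that the paper's one-line proof leaves implicit. Your multiset caveat is unnecessary: every edge of $E_2$ avoids the prefix, so $e\setminus\{v_1,\ldots,v_{i-1}\}=e$, and distinct edges of $H$ stay distinct in $H'$.
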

\begin{proof}
  If~$H - \{v_{1}, v_{2}, \ldots, v_{i - 1}\}$ admits a better solution~$\sigma$, then concatenating~$\langle v_{1}, v_{2}, \ldots, v_{i-1}\rangle$ and~$\sigma$ yields a solution of strictly smaller total cost.
\end{proof}

In an optimal solution, the numbers~$r_{\sigma}(\cdot)$ are non-increasing.
\begin{lemma}\label{lem:fix position}
  In any optimal solution~$\sigma$ of~$H$, it holds~
  \[
    r_{\sigma}(1)\ge r_{\sigma}(2)\ge \cdots \ge r_{\sigma}(|V(H)|).
  \]
\end{lemma}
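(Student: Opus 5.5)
We argue by an exchange of two adjacent vertices. Suppose, for contradiction, that $\sigma$ is optimal but $r_{\sigma}(i) < r_{\sigma}(i+1)$ for some $i$. Let $u = \sigma^{-1}(i)$ and $w = \sigma^{-1}(i+1)$. Let $\sigma'$ be the ordering obtained by swapping $u$ and $w$, so that $\sigma'(w) = i$, $\sigma'(u) = i+1$, and all other positions are unchanged. We will show that the cost of $\sigma'$ is strictly smaller, which contradicts optimality.

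First, only some edges can change cost. An edge $e$ whose first vertex under $\sigma$ sits at a position other than $i$ or $i+1$ keeps the same first-hit position under $\sigma'$. This holds because the set of vertices at positions $\le i-1$ is unchanged, and so is the set at positions $\le i+1$. So only edges first covered at position $i$ or $i+1$ under $\sigma$ can change cost. We partition these edges into three classes:
\begin{itemize}
\item $A$: edges first hit by $u$ at position $i$ that contain $w$;
\item $B$: edges first hit by $u$ at position $i$ that do not contain $w$;
\item $C$: edges first hit by $w$ at position $i+1$, which do not contain $u$ and avoid all earlier vertices.
\end{itemize}
Thus $r_{\sigma}(i) = |A| + |B|$ and $r_{\sigma}(i+1) = |C|$.

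Next we compute the costs under $\sigma'$. Edges in $A$ are now hit by $w$ at position $i$, so their cost is unchanged. Edges in $B$ are now hit by $u$ at position $i+1$, so each costs one more. Edges in $C$ are now hit by $w$ at position $i$, so each costs one less. The net change in cost is therefore
\[
|B| - |C| \le |A| + |B| - |C| = r_{\sigma}(i) - r_{\sigma}(i+1) < 0,
\]
which contradicts the optimality of $\sigma$.

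The only delicate point is the bookkeeping of the edge classes, in particular checking that edges containing both $u$ and $w$ keep their cost. Class $A$ handles this, and the inequality $|B| \le |A| + |B|$ absorbs it. The argument applies to all positions $1, \ldots, |V(H)|-1$, including positions beyond $|S_\sigma|$, where the effective coverages are zero and the monotonicity is trivial.
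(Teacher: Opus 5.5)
Your proof is correct and uses the same adjacent-transposition exchange argument as the paper. The only differences are cosmetic: you swap positions $i$ and $i+1$ directly instead of reducing to positions $1$ and $2$ via Observation~\ref{lem:remaining}, and you explicitly account for edges containing both swapped vertices (your class $A$), which the paper's one-line conclusion $r_{\sigma}(1) - r_{\sigma}(2) \ge 0$ passes over.
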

\begin{proof}
  By Observation~\ref{lem:remaining}, it suffices to show that~$r_{\sigma}(1)\ge r_{\sigma}(2)$.
  Let~$u$ and~$v$ be the vertices with~$\sigma(u) = 1$ and~$\sigma(v) = 2$.
Let~$\sigma'$ be obtained from~$\sigma$ by switching the order of~$u$ and~$v$.
  For each hyperedge containing~$u$ but not~$v$, the cost is~$1$ in~$\sigma$ and~$2$ in~$\sigma'$.
  For each hyperedge containing~$v$ but not~$u$, the cost is~$2$ in~$\sigma$ and~$1$ in~$\sigma'$. 
  All the other hyperedges have the same cost in~$\sigma$ and~$\sigma'$.
  By the optimality of~$\sigma$, we have~$r_{\sigma}(1) - r_{\sigma}(2) \ge 0$.
\end{proof}

Lemma~\ref{lem:fix position} implies a lower bound for the number of hyperedges the first vertex needs to cover.
\begin{lemma}\label{lem:initial}
  In any \mssc{}~$\sigma$, it holds~$r_{\sigma}(1) \ge |E(H)|/\tau(H)$.
\end{lemma}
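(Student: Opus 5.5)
We prove Lemma~\ref{lem:initial} by comparing the first vertex of $\sigma$ with a minimum set cover. Let $C$ be a minimum set cover of $H$, so $|C| = \tau(H)$. Every hyperedge contains at least one vertex of $C$, and there are only $\tau(H)$ vertices in $C$. By pigeonhole, some vertex $c \in C$ lies in at least $|E(H)|/\tau(H)$ hyperedges; that is, $\deg(c) \ge |E(H)|/\tau(H)$.

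Next we show that the first vertex $u = \sigma^{-1}(1)$ must satisfy $r_\sigma(1) = \deg(u) \ge \deg(c)$. The equality $r_\sigma(1) = \deg(u)$ holds because every edge containing $u$ is covered at position $1$. For the inequality, suppose first that $c = u$; then there is nothing to show. Otherwise, let $j = \sigma(c) > 1$.

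\begin{itemize}
\item One option is an exchange argument: move $c$ to the front and compare costs. This is less clean, because edges containing both $u$ and $c$ complicate the accounting.
\item The simpler route uses Lemma~\ref{lem:fix position} together with a counting bound. Every edge containing $c$ has its first vertex at some position $i \le j$, so
\[
\deg(c) \le \sum_{i=1}^{j} r_\sigma(i).
\]
This alone does not bound $r_\sigma(1)$.
\end{itemize}

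We therefore argue directly by a swap. Let $\sigma'$ be obtained from $\sigma$ by moving $c$ to position $1$ and shifting positions $1,\dots,j-1$ back by one. Consider how the cost of each edge changes.

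\begin{itemize}
\item Each edge containing $c$ now costs $1$. Its old cost was at least $1$, and for the edges covered by $c$ at position $j$ the old cost was exactly $j$.
\item Each edge not containing $c$ whose first vertex was at a position $i < j$ now costs $i+1$, an increase of exactly $1$.
\item All other edges keep their cost.
\end{itemize}

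Because this swap is not obviously comparable to $r_\sigma(1)$, we take a cleaner route: compare $\sigma$ only with the ordering $\langle c, u, \dots\rangle$, i.e.\ $c$ placed first and $\sigma$ otherwise unchanged. We then reason only about the first two positions using Observation~\ref{lem:remaining}.

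The cleanest formulation is a greedy-optimality claim: in an optimal $\sigma$, the first vertex has maximum degree. Suppose instead that $\deg(c) > \deg(u)$. Define $\sigma''$ by putting $c$ first and then following $\sigma$ with $c$ deleted. We compare $\sigma''$ with $\sigma$ edge by edge.

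\begin{itemize}
\item Each edge containing $c$ costs $1$ in $\sigma''$, saving $\mathrm{cost}_\sigma(e) - 1$.
\item Each edge avoiding $c$ whose first vertex precedes $c$ in $\sigma$ is delayed by exactly $1$.
\end{itemize}

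The main obstacle is bounding these delays against the savings. We handle it by induction on $j$, using Lemma~\ref{lem:fix position}: the effective coverages are non-increasing, so we can swap $c$ forward one position at a time. At each step, $c$ is compared with its immediate predecessor, exactly as in the proof of Lemma~\ref{lem:fix position}, applied to the remaining hypergraph via Observation~\ref{lem:remaining}. This yields the chain
\[
r_\sigma(1) \ge \deg(c) \ge |E(H)|/\tau(H).
\]

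Should the maximum-degree claim fail for general optimal $\sigma$, we fall back on the weaker statement actually needed. From $\deg(c) \le \sum_{i\le j} r_\sigma(i)$ and the fact that coverages are non-increasing, we would try to show the first vertex covers at least an average share, aiming at $r_\sigma(1) \ge |E(H)|/|S_\sigma|$. We would then relate $|S_\sigma|$ back to $\tau(H)$.
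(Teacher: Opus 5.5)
\textbf{There is a genuine gap.} The whole argument rests on the step $r_\sigma(1)=\deg(u)\ge\deg(c)$, which you justify by claiming that an optimal solution starts with a maximum-degree vertex. That claim is false, even for graphs.

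Take a vertex $x$ adjacent to $c_1,\dots,c_4$, where each $c_i$ also has two pendant leaves. Then $\deg(x)=4$, $\deg(c_i)=3$, $|E|=12$ and $\tau=4$. The ordering $\langle c_1,c_2,c_3,c_4\rangle$ costs $3(1+2+3+4)=30$. Any ordering starting with $x$ costs at least $4+2(2+3+4+5)=32$. So every optimal solution starts with a vertex of degree $3<\Delta$.

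Your adjacent-swap induction also cannot deliver the inequality. Lemma~\ref{lem:fix position}, applied to residual hypergraphs via Observation~\ref{lem:remaining}, compares \emph{effective} coverages of neighbouring positions. Chaining it therefore gives only $r_\sigma(1)\ge r_\sigma(\sigma(c))$. But $r_\sigma(\sigma(c))$ is the residual coverage of $c$, which can be far below $\deg(c)$ because earlier vertices have already hit many of $c$'s edges.

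The fallback fails as well. The bound $r_\sigma(1)\ge |E(H)|/|S_\sigma|$ is trivially true, but $|S_\sigma|\ge\tau(H)$, so it points in the wrong direction. Moreover $|S_\sigma|$ can be much larger than $\tau(H)$: Theorem~\ref{thm:main-1} has $\tau(H)=3$ and $|S_\sigma|=n$.

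\textbf{The missing idea} is to compare whole coverage sequences rather than individual degrees.
\begin{enumerate}
\item Suppose $r_\sigma(1)<|E(H)|/\tau(H)$. By Lemma~\ref{lem:fix position}, every $r_\sigma(i)$ is then below $|E(H)|/\tau(H)$.
\item Build a competitor $\pi$ that lists a minimum set cover $C$ first, greedily by residual degree within $C$. Then $r_\pi$ is non-increasing and vanishes after position $\tau(H)$.
\item The cost of $\pi$ is at most that of the balanced sequence with $\tau(H)$ entries, each equal to $\lceil |E(H)|/\tau(H)\rceil$ or $\lfloor |E(H)|/\tau(H)\rfloor$.
\item On positions $1,\dots,\tau(H)$, $r_\sigma$ is entrywise at most this balanced sequence, strictly so at position $1$. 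Beyond position $\tau(H)$ it is at least $0$.
\item Proposition~\ref{lem:bettersort} with $t=\tau(H)$ then shows that $\sigma$ costs strictly more than the balanced sequence, hence strictly more than $\pi$. This contradicts optimality.
\end{enumerate}
This is the paper's proof. It makes no claim about which vertex comes first, which is consistent with the example above, where $r_\sigma(1)=3=|E|/\tau$ exactly.
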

\begin{proof}
  Let~$S$ be a minimum set cover of~$H$, hence~$|S| = \tau(H)$.
  We take an ordering~$\pi$ of~$V(H)$ such that~$\pi(x) \le \tau(H)$ if and only if~$x\in S$, and
\[
  r_{\pi}(1) \ge r_{\pi}(2) \ge \cdots \ge r_{\pi}(\tau(H)).
\]
Such an ordering can be produced as follows.
Let~$H_{1} = H$.
For~$i = 1, \ldots, \tau(H)$, we take a vertex~$v_{i}$ in~$S$ with the maximum degree in~$H_{i}$, and let~$H_{i+1} = H_{i} - v_{i}$.
The ordering is~$\langle v_{1}, \ldots, v_{\tau(H)} \rangle$ followed by other vertices in an arbitrary order.
For~$i = 1, \ldots, \tau(H) - 1$,
\[
  r_{\pi}(i+1) = d_{H_{i+1}}(v_{i+1}) \le d_{H_{i}}(v_{i+1}) \le d_{H_{i}}(v_{i}) = r_{\pi}(i).
\]
The maximum possible cost of~$\pi$ is when~$r_\pi(i)$ is either~$\left\lceil {|E(H)| \over \tau(H)} \right\rceil$ or~$\left\lfloor {|E(H)| \over \tau(H)} \right\rfloor$ for all~$i=1, \ldots, \tau(H)$.
If~$r_{\sigma}(1) < |E(H)|/\tau(H)$, then~$r_\sigma(i) < |E(H)|/\tau(H)$ for all~$i$.
By Proposition~\ref{lem:bettersort}, the cost of~$\sigma$ is higher than~$\pi$, a contradiction.
\end{proof}

We are now ready to prove Theorem~\ref{thm:mssc-rank}.

\begin{proof}[Proof of Theorem~\ref{thm:mssc-rank}]
  We deal with the hypergraph first.
  Let~$\tau = \tau(H)$ and~$\tauplus = \tauplus(H)$ and~$m = |E(H)|$.
  For~$i = 1, \ldots, n$, let~$H_{i} = H - \{v_{1}, \ldots, v_{i-1}\}$ and~$m_{i} = |E(H_{i})|$.
  Note that~$m_{1} = m$ and~$m_{i+1} = m_{i} - r_\sigma(i)$ for~$i = 1, \ldots, n - 1$.
  By Observation~\ref{lem:remaining} and Lemma~\ref{lem:initial},
  \[
    r_\sigma(i) \geq \frac{m_{i}}{\tau (H_{i})} \geq \frac{m_{i}}{\tau}.
  \]
  Thus,
  \[
    m_{i+1} = m_{i} - r_\sigma(i) \leq m_{i} - \frac{m_{i}}{\tau} = \frac{\tau-1}{\tau}m_{i},
  \]
  which implies
  \[
    m_{\tauplus} \le \left(\frac{\tau-1}{\tau} \right)^{\tauplus - 1} m_{1} = \left(\frac{\tau-1}{\tau} \right)^{\tauplus - 1} m.
  \]
Since~$\tau \ge 2$ and~$m_{\tauplus} \ge 1$, we have
  \[
    m \ge \left(1+ {1\over \tau - 1} \right)^{(\tau - 1) \left(\frac{\tauplus-1}{\tau - 1}\right)}\ge 2^{\frac{\tauplus-1}{\tau - 1}}, 
  \]
  and
  \begin{equation}
    \label{eq:8}
    \tauplus \le (\tau - 1) \log_{2} m + 1 < \tau \log_{2} m.
  \end{equation}

Now consider a graph~$G$.
  We fix a minimum vertex cover~$S^*$, and a \msvc{}~$\sigma$ of~$G$.
  Let~$v$ be the first vertex in~$\sigma$.
  We may assume without loss of generality that~$v\not\in S^*$, as otherwise we consider the subgraph~$G - v$ (Observation~\ref{lem:remaining}).
  Repeating this argument while the first chosen vertex lies in~$S^*$, we eventually arrive at a subgraph in which the first chosen vertex is outside a fixed minimum vertex cover, without changing the validity of the bound.
  As a result,~$N(v)\subseteq S^*$, and hence~$r_{\sigma}(1) = d(v) \le |S^*| = \tau.$
  By Lemma~\ref{lem:initial},~$m \le \tau^{2}$.  
  The statement then follows from~\eqref{eq:8}.
\end{proof}

The proof of Lemma~\ref{lem:fix position} is simple because switching two juxtaposed vertices has no impact on the effective coverages of other vertices.  In the later proofs, we need to switch vertices that are not juxtaposed, for which we need a simple arithmetic inequality for comparing two solutions of the same hypergraph.

\begin{proposition}\label{lem:bettersort}
  Let~$\langle s_{1}, \ldots, s_{n}\rangle$ and~$\langle s'_{1}, \ldots, s'_{n}\rangle$ be two non-increasing sequences of nonnegative integers such that~$\sum_{i=1}^{n} s_{i} = \sum_{i=1}^{n} s'_{i}$.  If there exists an integer~$t \in \{1,\dots,n\}$ such that
  \begin{align}
    \label{eq:2}
    s_i &\le s'_i, \quad i = 1, \ldots, t,
    \\
    \label{eq:4}
    s_i &\ge s'_i, \quad i = t+1, \ldots, n.
  \end{align}
then
  \[
    \sum_{i=1}^{n} i\cdot s_{i} \ge \sum_{i=1}^{n} i\cdot s'_{i},
  \]
  with equality if and only if all inequalities in \eqref{eq:2} and \eqref{eq:4} are equalities.
\end{proposition}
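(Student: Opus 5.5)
We prove the inequality by Abel summation (summation by parts), rewriting $\sum_i i\cdot s_i$ through tail sums. Put $d_i = s_i - s'_i$ and $D_j = \sum_{i=j}^{n} d_i$ for $j = 1, \ldots, n$. By the equal-sum hypothesis, $D_1 = 0$. Then
\[
  \sum_{i=1}^{n} i\cdot s_i - \sum_{i=1}^{n} i\cdot s'_i = \sum_{i=1}^{n} i\, d_i = \sum_{j=1}^{n} D_j = \sum_{j=2}^{n} D_j ,
\]
because each $d_i$ is counted once for every $j \le i$. It therefore suffices to show $D_j \ge 0$ for every $j$, and to identify when all of them vanish.

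\textbf{Sign of the tail sums.} Take $j \ge t+1$. Every term $d_i$ with $i \ge j$ satisfies $d_i \ge 0$ by \eqref{eq:4}, so $D_j \ge 0$. Now take $2 \le j \le t$. Using $D_1 = 0$,
\[
  D_j = -\sum_{i=1}^{j-1} d_i .
\]
Each $d_i$ with $i \le j-1 \le t$ satisfies $d_i \le 0$ by \eqref{eq:2}, so again $D_j \ge 0$. This gives the inequality $\sum i\, s_i \ge \sum i\, s'_i$. Monotonicity of the sequences is not needed for this part; only the sign pattern and the equal total are used.

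\textbf{Equality case.} If all inequalities in \eqref{eq:2} and \eqref{eq:4} are equalities, then every $d_i = 0$ and the two sums agree. Conversely, suppose the two sums are equal. Then $\sum_{j=2}^n D_j = 0$ with every $D_j \ge 0$, so $D_j = 0$ for all $j \ge 2$. Since also $D_1 = 0$, we get $d_j = D_j - D_{j+1} = 0$ for $j < n$, and $d_n = D_n = 0$. Hence all $d_i = 0$.

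The one point needing care is the range $j \le t$. Nonnegativity there does not follow from the tail itself; it has to be rewritten as the negated head sum using $D_1 = 0$. Once that is done, the equality case is routine.
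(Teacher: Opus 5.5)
Your proof is correct and is essentially the paper's argument. The paper also uses summation by parts, but with prefix sums $D_k=\sum_{i=1}^{k} d_i$, showing $D_k\le 0$ by the same split at $t$ and the zero total; your tail sums are exactly the negatives of those prefix sums shifted by one index, and the equality case is obtained in both from the vanishing of all partial sums.
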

\begin{proof}
  For $i=1,\dots,n$, define $d_i = s_i - s'_i$, and~$D_k = \sum_{i=1}^k d_i$.
Then
  \[
    \sum_{i=1}^n d_i \;=\; \sum_{i=1}^n s'_i - \sum_{i=1}^n s_i \;=\; 0.
  \]
  From \eqref{eq:2} and \eqref{eq:4} we have~$d_i \le 0$ for $i \le t$, and~$d_i \ge 0$ for $i > t$.
  Then $D_n = 0$, and~$D_i \le 0$ for all $i = 1,\dots,n$: when~$i \le t$, $D_i$ is the sum of non-positive terms, and for $i > t$, it is the negation of the sum of nonnegative terms.

Thus,
\[
\sum_{i=1}^n i\,(s_i - s'_i)
= \sum_{i=1}^n i\,d_i
= \sum_{i=1}^n \sum_{k=1}^i d_i
= \sum_{k=1}^n \sum_{i=k}^n d_i
= \sum_{k=1}^n (D_n - D_{k-1})
= -\sum_{k=1}^{n-1} D_k \ge 0,
\]
where we use the convention $D_0 = 0$.

Moreover, equality holds if and only if $D_k = 0$ for all $k = 1,\dots,n-1$, which in turn is equivalent to $d_i = 0$ for all $i$, i.e.\ $s_i = s'_i$ for all $i$. 
\end{proof}

We will frequently switch two vertices in a solution to obtain another solution.  The following corollary will help us compare their costs.

\begin{corollary}\label{cor:switching}
  Let $\sigma$ be an ordering of $V(H)$ and let $\sigma'$ be obtained from $\sigma$ by swapping two vertices $u$ and $v$ with $\sigma(u) < \sigma(v)$.  Denote $p = \sigma(u)$ and $q = \sigma(v)$.  If
\begin{enumerate}
  \item $r_{\sigma'}(p) > r_\sigma(p)$, and
  \item $r_{\sigma'}(i) \le r_\sigma(i)$ for all $i$ with $p < i \le q$.
\end{enumerate}
Then $\sigma'$ has strictly smaller total cost than $\sigma$.
\end{corollary}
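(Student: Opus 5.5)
The plan is to compare the two cost sums $\sum_i i\cdot r_\sigma(i)$ and $\sum_i i\cdot r_{\sigma'}(i)$ position by position, first showing that the effective coverages can differ only inside the window $[p,q]$. Then we rerun the partial-sum argument from the proof of Proposition~\ref{lem:bettersort} on that window. We do not invoke the proposition itself, because its monotonicity hypothesis need not hold for $\sigma$ or $\sigma'$; its proof never uses monotonicity anyway.

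The first step is the locality claim. For each $i$, the number of hyperedges covered by the first $i$ vertices, $\sum_{j\le i} r_\sigma(j)$, depends only on the \emph{set} $\sigma^{-1}(\{1,\ldots,i\})$. It equals the number of hyperedges meeting that set. Swapping $u$ and $v$ leaves this set unchanged when $i<p$ (neither vertex is in it) and when $i\ge q$ (both are in it). Hence the prefix sums of $r_\sigma$ and $r_{\sigma'}$ agree for $i<p$ and for $i\ge q$. Taking differences of consecutive prefix sums gives $r_\sigma(i)=r_{\sigma'}(i)$ for all $i<p$ and all $i>q$, together with
\[
  \sum_{i=p}^{q} r_\sigma(i)=\sum_{i=p}^{q} r_{\sigma'}(i).
\]

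The second step is the computation. Set $d_i=r_\sigma(i)-r_{\sigma'}(i)$ for $p\le i\le q$. The hypotheses give $d_p<0$ and $d_i\ge 0$ for $p<i\le q$, and the first step gives $\sum_{i=p}^q d_i=0$. Define the partial sums $D_k=\sum_{i=p}^{k} d_i$, with $D_{p-1}=0$.

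\begin{itemize}
\item For every $k$ with $p\le k\le q-1$, the identity $D_k=-\sum_{i=k+1}^{q} d_i$ shows $D_k\le 0$.
\item At the left end, $D_p=d_p<0$, and this index is in range because $q>p$.
\end{itemize}

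By the same summation by parts as in Proposition~\ref{lem:bettersort},
\[
  \sum_{i=p}^{q} i\, d_i = q D_q - \sum_{k=p}^{q-1} D_k = -\sum_{k=p}^{q-1} D_k \ \ge\ -D_p > 0 ,
\]
using $D_q=0$. Since all other positions contribute zero, the cost of $\sigma$ minus the cost of $\sigma'$ is exactly this positive quantity, so $\sigma'$ is strictly cheaper.

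The only step that needs care is the locality claim, since the conclusion rests on the coverages outside $[p,q]$ being untouched by the swap. The prefix-set argument handles it cleanly, and the rest is routine.
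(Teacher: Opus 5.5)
Your proof is correct and takes essentially the paper's route: the paper also notes that the coverages agree at all positions outside $[p,q]$, then applies the partial-sum argument of Proposition~\ref{lem:bettersort} with $t=p$. Two differences are matters of care rather than substance: your prefix-set argument justifies the locality claim more explicitly than the paper's terse ``coincide,'' and you rerun the summation by parts instead of citing the proposition, correctly noting that its non-increasing hypothesis need not hold for $\sigma,\sigma'$ even though its proof never uses it.
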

\begin{proof}
Let
\[
s_i = r_\sigma(i), \qquad s'_i = r_{\sigma'}(i), \qquad i = 1,\dots,n.
\]
Clearly $\sum_{i=1}^n s_i = \sum_{i=1}^n s'_i = |E(H)|$, since both orderings cover
all edges exactly once.

By construction, $\sigma$ and $\sigma'$ coincide on all positions $i < p$, hence
$s_i = s'_i$ for $i < p$. At position $p$ we have $s'_{p} > s_{p}$ by the first assumption.  For positions $p < i \le q$, the second assumption gives
$s'_i \le s_i$. For $i > q$, $\sigma$ and $\sigma'$ again coincide, hence
$s_i = s'_i$ for all $i > q$.
The statement follows from Proposition~\ref{lem:bettersort} with~$t = p$.
\end{proof}

\section{The lower bound: Theorem~\ref{thm:main-1}}

Consider a hypergraph on vertex set~$A\cup B$ with~$|A| = n$ and~$|B| = 3$.
We will construct hyperedges such that a solution with~$A$ preceding~$B$ has a cost smaller than a solution with~$A$ following~$B$, while all other solutions are even worse than the latter.
This construction establishes Theorem~\ref{thm:main-1}.

Given a graph~$G$ on~$n$ vertices~$\{v_{1}, \ldots, v_{n}\}$ and~$m$ edges, we define a hypergraph~$H_{G}$ as follows.
The vertex set is~$A\cup B$, where~$A = V(G)$ and~$B = \{b_{1}, b_{2}, b_{3}\}$.
The hyperedges are constructed based on the subsets of~$A$.
For each nonempty subset~$X$ of~$A$ that is not in~$\binom{A}{2}\setminus E(G)$ (i.e., $|X|\ne 2$ or~$X\in E(G)$), introduce three hyperedges~$X\cup \{b_{1}\}$,~$X\cup \{b_{2}\}$, and~$X\cup \{b_{3}\}$.
Formally, the edge set of~$H_{G}$ is:
\begin{align*}
  E(H_G) = & \big\{\{v_{i}, b_{k}\} \mid i \in \{1, \ldots, n\}, k \in \{1, 2, 3\}\big\}
  \\
  \cup & \big\{\{v_{i}, v_{j}, b_{k}\} \mid v_{i} v_{j}\in E(G), k \in \{1, 2, 3\}\big\}
  \\
  \cup & \big\{ X\cup \{b_{k}\} \mid X \subseteq A, |X| \ge 3, k \in \{1, 2, 3\} \big\}.
\end{align*}

\begin{proposition}\label{lem:construction-msc}
  Both~$A$ and~$B$ are minimal set covers of~$H_{G}$, and they are the only minimal set covers.
\end{proposition}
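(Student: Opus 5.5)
We prove the two parts separately: first that $A$ and $B$ are set covers and are minimal, then that every set cover contains one of them. Throughout we use the structure of $E(H_G)$: every hyperedge has the form $X\cup\{b_k\}$ with $X\subseteq A$ nonempty and $k\in\{1,2,3\}$. In particular, each hyperedge meets both $A$ and $B$.

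For the first part, this structure shows immediately that $A$ and $B$ are both set covers. For minimality of $B$, remove $b_k$. The hyperedge $\{v_1,b_k\}$ meets $B$ only in $b_k$, so it is no longer covered. For minimality of $A$, remove some $v_i$. The hyperedges $\{v_i,b_1\},\{v_i,b_2\},\{v_i,b_3\}$ meet $A$ only in $v_i$, so they become uncovered. Both arguments use only the singleton hyperedges $\{v_i,b_k\}$, which exist for every $i$ and $k$.

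For uniqueness, let $S$ be any set cover. We show that $S\supseteq A$ or $S\supseteq B$. Suppose neither holds. Then some $v_i\notin S$ and some $b_k\notin S$. The hyperedge $\{v_i,b_k\}$ is then disjoint from $S$, which is a contradiction. Hence every set cover contains $A$ or $B$. A minimal set cover is contained-minimal, and $A$ and $B$ are themselves set covers, so a minimal set cover must equal $A$ or $B$.

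There is no real obstacle here. The one point that needs care is noting that the size-one subsets $X$ always produce hyperedges, since the excluded family is $\binom{A}{2}\setminus E(G)$ and contains only 2-subsets. This is what makes the graph $G$ irrelevant to the proposition; $G$ only matters later, for the costs.
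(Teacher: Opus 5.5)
Your proof is correct and follows essentially the same argument as the paper: in both, a vertex $v_i\notin S$ forces all of $B$ into $S$ via the edges $\{v_i,b_k\}$, so every set cover contains $A$ or $B$. The only difference is cosmetic: you check minimality of $A$ and $B$ directly via the singleton edges, whereas the paper deduces it from the dichotomy together with the disjointness of $A$ and $B$.
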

\begin{proof}
  Let~$S$ be a set cover of~$H_{G}$.
  If~$A\not\subseteq S$, there exists a vertex~$a\in A\setminus S$.
  Since~$S$ must cover the hyperedges~$\{a, b_{1}\}$,~$\{a, b_{2}\}$, and~$\{a, b_{3}\}$, and~$a \notin S$, it must contain all three vertices in~$B$.
  Thus, if~$S$ does not contain~$A$, it must contain~$B$.
  Since~$A$ and~$B$ are disjoint and cover all edges, they are the only minimal set covers.
\end{proof}

We now analyze the cost of a solution~$\pi_{B}$ that orders~$B$ before~$A$, where the order of~$b_{1}, b_{2}, b_{3}$ is irrelevant.
For~$k = 1, 2, 3$,
\begin{equation}
  \label{eq:5}
  r_{\pi_B}(k) = n + m + \left(2^n - 1 - n - \binom{n}{2}\right) = 2^{n} - 1 - \binom{n}{2} + m.
\end{equation}
Since~$r_{\pi_B}(k) = 0$ for~$k \ge 4$, the cost is
\begin{equation}
  \label{eq:6}
  \sum_{i=1}^{n+3} i\cdot r_{\pi_B}(i) = \sum_{i=1}^{3} i\cdot r_{\pi_B}(i) = 6 \left(2^{n} - 1 - \binom{n}{2} + m\right).  
\end{equation}

In the following two propositions, we show that any solution that orders~$A$ before~$B$ is strictly better than~$\pi_B$, while any other solution is strictly worse than~$\pi_B$.

\begin{proposition}\label{lem:a-to-b}
  The cost of any solution that orders~$A$ before~$B$ is smaller than that of~$\pi_{B}$.
\end{proposition}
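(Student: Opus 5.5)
The plan is to compute the cost of an arbitrary ordering $\sigma$ that places $A$ before $B$ exactly, and then compare it with~\eqref{eq:6} through a simple counting inequality. Relabel so that $\sigma = \langle a_1, \ldots, a_n, b_{\cdot}, b_{\cdot}, b_{\cdot}\rangle$. Every hyperedge has the form $X \cup \{b_k\}$ with $\emptyset \ne X \subseteq A$, so its cost under $\sigma$ is $\min\{i : a_i \in X\}$. Each admissible $X$ (meaning $|X|\ne 2$ or $X \in E(G)$) produces three hyperedges, one for each $b_k$, and all three have the same cost. Hence
\[
  \mathrm{cost}(\sigma) = 3\Bigl(\sum_{\emptyset\ne X\subseteq A} \min_{a_i\in X} i \;-\; \sum_{\{a_i,a_j\}\in \binom{A}{2}\setminus E(G),\, i<j} i\Bigr).
\]

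First I evaluate the sum over all nonempty subsets. There are exactly $2^{n-i}$ subsets whose minimum index is $i$, so the sum is $\sum_{i=1}^n i\,2^{n-i} = 2^{n+1}-n-2$. For the correction term, write $N = \binom n2 - m$ for the number of non-edges of $G$ and $\Sigma$ for the sum of the smaller positions over all non-edges. This gives $\mathrm{cost}(\sigma) = 3(2^{n+1}-n-2-\Sigma)$. By~\eqref{eq:6}, $\mathrm{cost}(\pi_B) = 3(2^{n+1}-2-2N)$. The claim $\mathrm{cost}(\sigma) < \mathrm{cost}(\pi_B)$ is therefore equivalent to
\[
  \sum_{\text{non-edges } \{a_i,a_j\},\, i<j} (i-2) \;>\; -n .
\]

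The key (and only nontrivial) step is this last inequality. It has to hold for every graph $G$ and every ordering of $A$, so it cannot depend on the structure of $G$. The argument is a direct bound. A non-edge with smaller index $i\ge 2$ contributes a nonnegative term. A non-edge with smaller index $1$ contributes $-1$, and such non-edges must contain $a_1$, so there are at most $n-1$ of them. Hence the left-hand side is at least $-(n-1) > -n$, which gives strict inequality.

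I expect no real obstacle beyond bookkeeping. The points to get right are the following: counting the three copies of each $X$ correctly; noting that the $B$ vertices cover nothing once $A$ has been placed, so the cost is determined entirely by positions $1,\dots,n$; and making sure the singletons and the sets with $|X|\ge 3$ are included exactly as in~\eqref{eq:5}.
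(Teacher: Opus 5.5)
Your proof is correct and follows the paper's argument. The paper's $\delta_i = (n-i) - r_\sigma(i)$ is exactly the number of non-edges whose earlier endpoint is $a_i$, so its bound $\sum_i i\delta_i \ge 2\sum_i \delta_i - \delta_1$ with $\delta_1 \le n-1$ is your inequality $\sum_{\text{non-edges}}(i-2) \ge -(n-1)$, grouped by position rather than summed over non-edges directly.
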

\begin{proof}
  Let~$\pi$ be a solution for~$H_G$ with~$\pi(b_k)=n+k$ for all~$k=1,2,3$, and let~$\sigma$ be the subsequence of the first~$n$ vertices in~$\pi$.  Then~$\sigma$ is an ordering of~$A=V(G)$, and hence a solution of~$G$.

  For each~$i=1,\dots,n$, define
  \[
    \delta_i = (n-i)-r_\sigma(i).
  \]
  Since the~$i$th vertex in~$\sigma$ can cover at most the~$n-i$ remaining vertices of~$G$, we have~$\delta_i\ge 0$.  Moreover,
\[
  \sum_{i=1}^{n} \delta_{i} = \sum_{i=1}^n (n-i) - \sum_{i=1}^{n} r_{\sigma}(i) = \binom{n}{2} - m.
\]

We now compute the effective coverage of the~$i$th vertex in~$\pi$.  After the first~$i-1$ vertices of~$A$ have been selected, there remain exactly~$n-i$ vertices of~$A$ not yet chosen.  The~$i$th selected vertex covers:
\begin{itemize}
\item exactly~$3r_\sigma(i)$ hyperedges of the form~$\{v_x,v_y,b_k\}$ corresponding to edges~$v_xv_y\in E(G)$ first hit at position~$i$ in~$\sigma$; and
\item exactly~$3\bigl(2^{n-i}-(n-i)\bigr)$ hyperedges of the form~$X\cup\{b_k\}$ with either~$|X|=1$ or~$|X|\ge 3$, since among the~$2^{n-i}$ subsets of the remaining~$n-i$ vertices, exactly~$n-i$ are singletons.
\end{itemize}
Hence,
\[
  r_{\pi}(i) = 3 r_{\sigma}(i) + 3 \left( 2^{n - i} - (n - i) \right) = 3 \left( 2^{n - i} - \delta_{i} \right).
\]
The cost of~$\pi$ is thus
\begin{align*}
\sum_{i=1}^{n+3} i\cdot r_{\pi}(i) = &\sum_{i=1}^{n} i\cdot r_{\pi}(i)
  \\
 =&\sum_{i = 1}^n 3 i \left( 2^{n - i} - \delta_{i} \right)
\\ 
  = & \left( 3 \sum_{i = 1}^n i \cdot 2^{n-i} \right) - \left( 3 \sum_{i = 1}^n i \delta_{i} \right).
\end{align*}
  Since~$i\ge 2$, we have
  \[
    \sum_{i=1}^n i\delta_i
    = \delta_1 + \sum_{i=2}^n i\delta_i
    \ge \delta_1 + 2\sum_{i=2}^n \delta_i
    = 2\sum_{i=1}^n \delta_i - \delta_1.
  \]
  Hence,
  \begin{align*}
    \sum_{i=1}^{n+3} i\cdot r_\pi(i)
    & \le \left( 3 \sum_{i = 1}^n i \cdot 2^{n-i} \right) - \left( 6 \sum_{i = 1}^n \delta_{i} \right) + 3 \delta_{1} 
    \\
    &= 3\bigl(2^{n+1}-(n+2)\bigr) - 6\left(\binom{n}{2}-m\right) + 3\delta_1 \\
    &= 6 \cdot 2^{n} - 3 (n + 2) - 6 \binom{n}{2} + 6 m + 3 \delta_{1}.
    \\
    &< 6 \cdot 2^{n} - 6 - 6 \binom{n}{2} + 6 m,
  \end{align*}
  where the last inequality holds because $ \delta_1 = (n-1)-r_\sigma(1) \le n-1 < n$.
\end{proof}

\begin{lemma}\label{lem:construction-mssc}
  For any graph~$G$, every minimum sum set cover of~$H_{G}$ places all vertices of~$B$ after all vertices of~$A$.  
\end{lemma}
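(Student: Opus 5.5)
The plan is to reduce the statement to a claim about the first vertex of an optimal ordering, and then prove that claim by comparing costs directly, with Proposition~\ref{lem:a-to-b} as the benchmark.

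\textbf{Step 1 (self-similarity).} Let $\sigma$ be a minimum sum set cover of $H_G$. Suppose some $b_k$ precedes some vertex of $A$. Let $j<n$ be the number of $A$-vertices placed before the first vertex of $B$, and let $Y$ be the set of these $j$ vertices. A hyperedge $X\cup\{b_k\}$ survives the removal of $Y$ exactly when $X\cap Y=\emptyset$. The condition ``$|X|\neq 2$ or $X\in E(G)$'' restricted to $X\subseteq A\setminus Y$ is the same condition for the induced subgraph $G'=G-Y$. Hence $H_G-Y=H_{G'}$, where $G'$ has $n'=n-j\ge 1$ vertices. By Observation~\ref{lem:remaining}, the suffix of $\sigma$ is a minimum sum set cover of $H_{G'}$, and it starts with a vertex of $B$. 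It therefore suffices to prove the following claim.

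\emph{Claim:} for every graph $G'$ with $n'\ge 1$ vertices, no minimum sum set cover of $H_{G'}$ starts with a vertex of $B$.

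\textbf{Step 2 (generalize to $c$ copies).} To prove the claim, I would work with the hypergraph $H^{c}_{G'}$ that has only $c\in\{1,2,3\}$ of the vertices $b_k$; thus $H_{G'}=H^3_{G'}$. Deleting one $b$-vertex from $H^c_{G'}$ gives $H^{c-1}_{G'}$, and deleting $A$-vertices gives $H^c$ of an induced subgraph, exactly as in Step 1. Let
\[
M=2^{n'}-1-\tbinom{n'}{2}+m',
\]
where $m'=|E(G')|$; this is the number of hyperedges through each $b_k$. Every minimum sum set cover of $H^c_{G'}$ has one of two shapes: all of $A$ placed first, or some $b$ placed first, in which case the rest is optimal for $H^{c-1}$ of some induced subgraph.

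\textbf{Step 3 (quantitative comparison).} The computation in the proof of Proposition~\ref{lem:a-to-b} generalizes: placing all of $A$ first in $H^c_{G'}$ gives $r(i)=c\,(2^{n'-i}-\delta_i)$ with $0\le\delta_i\le n'-i$. This yields two-sided bounds on its cost of the form $c\bigl(2\cdot 2^{n'}-O(n'^2)\bigr)$. Placing $B$ first costs $M\binom{c+1}{2}$, that is, $M$, $3M$ and $6M$ for $c=1,2,3$. Since $M\approx 2^{n'}$, the $A$-first option (cost about $2c\cdot 2^{n'}$) is cheaper than $B$-first only when $c=3$. The argument is then an induction on $n'+c$, proving precise statements of the form: the optimal cost of $H^c_{G'}$ is at least $\min\{\text{cost of }A\text{ first},\ M\binom{c+1}{2}\}$. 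For $c=3$, the $A$-first cost is strictly smaller than $6M$ by Proposition~\ref{lem:a-to-b}. So it remains to show that every solution starting with $b$ costs at least $6M$. Such a solution pays $M$ for the first $b$ and then, shifted by one position, at least the optimum of $H^2$ of some induced subgraph. By the inductive lower bound for $c=2$, this is at least $M+(M+3M)=6M$ when the remainder takes $b_2,b_3$ next. In the interleaved cases the exponential terms must dominate.

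\textbf{Main obstacle.} The hardest part will be the interleaved orderings for $c=2$: some $A$-vertices, then $b$'s, then more $A$-vertices. Here the lower-order terms ($\binom{n'}{2}$, $m'$, $\delta_i$) must be controlled for small $n'$. I would first try to handle this uniformly through Corollary~\ref{cor:switching}: swap a $b$-vertex at position $p$ with a later $A$-vertex, and check that the coverage at $p$ strictly increases while later coverages do not increase. If the swap argument does not go through, I would fall back on the explicit inequality $2^{n'-i}-(n'-i)\ge \tfrac12 M'$, applied after removing $i$ vertices, and verify the cases $n'\le 3$ by hand.
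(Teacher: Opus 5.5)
\textbf{There is a genuine gap in Step~3.} Your Step~1 is sound and matches the paper's induction. Peeling off the $A$-vertices that precede the first $b$, using Observation~\ref{lem:remaining} and $H_G-Y\cong H_{G-Y}$, correctly reduces the lemma to showing that no optimal ordering of $H_{G'}$ begins with a vertex of $B$.

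The decisive case is a $b$-first ordering that then interleaves $A$- and $B$-vertices. You dispose of it only with ``the exponential terms must dominate'' and an unchecked swap argument. The inductive bound for $c=2$ that everything rests on is never proved.

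Worse, the target you set yourself is false for small $n'$: you claim that \emph{every} solution starting with $b$ costs at least $6M$.
\begin{itemize}
\item For $n'=1$, $M=1$, and the ordering $\langle b_1,v\rangle$ costs $1+2\cdot 2=5<6M$.
\item For $G'$ a single edge $v_1v_2$, $M=3$, and $\langle b_1,v_1,v_2\rangle$ covers $3,4,2$ hyperedges, so it costs $3+8+6=17<18=6M$.
\end{itemize}
So any bound quantified over all $b$-first orderings cannot work as stated; optimality of the ordering has to be used.

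The missing idea is exactly this. Suppose an \emph{optimal} ordering $\pi$ starts with $b_1$. Then $r_\pi(1)=M$, and Lemma~\ref{lem:fix position} gives $r_\pi(i)\le M$ for all $i\ge 2$. The hypergraph has exactly $3M$ hyperedges. So $(r_\pi(i))_i$ is compared with the profile $(M,M,M,0,\ldots,0)$ of $\pi_B$ by Proposition~\ref{lem:bettersort} with $t=3$. This gives cost at least $6M$ at once, whatever interleaving follows. Proposition~\ref{lem:a-to-b} says every $A$-first ordering costs strictly less than $6M$, so $\pi$ is not optimal.

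This is the paper's argument. It needs no auxiliary hypergraphs $H^c_{G'}$, no induction on $n'+c$, and no hand-checking of small cases. Your two counterexamples have coverage sequences $1,2$ and $3,4,2$, which are not non-increasing. That is precisely the behaviour Lemma~\ref{lem:fix position} excludes in optimal solutions.
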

\begin{proof}
Let~$\pi$ be a solution in which~$\{\pi(b_{1}), \pi(b_{2}), \pi(b_{3})\}$ is neither~$\{1, 2, 3\}$ nor~$\{n+1, n+2, n+3\}$.
  Since the vertices in~$B$ are symmetric, we may assume without loss of generality that
  \[
    \pi(b_{1}) < \pi(b_{2}) < \pi(b_{3}).
  \]
  We use induction on~$|V(G)|$ to show~$\pi$ is not optimal.
  The base case is trivial when~$A$ consists of a single vertex.
    When $n=2$, a direct case analysis on whether~$v_{1} v_{2}\in E(G)$ shows that every ordering with a vertex of~$B$ among the first two positions has larger cost than an ordering with~$A$ first.  

  Consider the general case~$n > 2$.
  If~$\pi(b_{1}) \ne 1$, let~$v$ be the first vertex in the ordering~$\pi$ and let~$\pi'$ be the ordering obtained from~$\pi$ by removing~$v$ and shifting indices.
  By Observation~\ref{lem:remaining},~$\pi'$ is a \mssc{} of~$H_{G} - v$.
  Since~$H_{G} - v$ is isomorphic to~$H_{G - v}$, this violates the induction hypothesis.

  Hence~$\pi(b_{1}) = 1$, and $r_{\pi}(1) = r_{\pi_{B}}(1) = 2^{n} - 1 - \binom{n}{2} + m$.
By Lemma~\ref{lem:fix position}, for~$i = 2, 3$,
  \[
    r_{\pi}(i) \le r_{\pi}(1) = r_{\pi_{B}}(i).
  \]
  Moreover,~$r_{\pi}(i) \ge 0 = r_{\pi_{B}}(i)$ when $i \ge 4$, where the inequality is strict for~$i = 4$ by Proposition~\ref{lem:construction-msc}.
  By Proposition~\ref{lem:bettersort} with~$t = 3$, we can conclude that the cost of~$\pi$ is strictly larger than~\eqref{eq:6}.  

  Thus,~$\pi$ is not an optimal solution.  Combining with Proposition~\ref{lem:a-to-b}, we can conclude that any optimal solution must put~$B$ at the end.
\end{proof}

Theorem~\ref{thm:main-1} now follows immediately from the preceding discussion.
\begin{proof}[Proof of Theorem~\ref{thm:main-1}]
  By Proposition~\ref{lem:construction-msc},~$\tau(H_{G}) = |B| = 3$.
  By Lemma~\ref{lem:construction-mssc}, any minimum sum set cover must order all vertices of~$A$ before~$B$. Thus, the set cover associated with the optimal permutation is~$A$, implying~$\tauplus(H_{G}) = |A| = n$.
\end{proof}

\section{The lower bound for graphs: Theorem~\ref{thm:msvc}}

Fix a positive integer~$n$.  For a positive integer $q \le n$, let $B_q$ be the bipartite graph with bipartition $L \uplus R$, where $|L| = 2n^{q}$ and $|R| = (q+2)n^{q}$. Partition $R$ as
\[
R = R_0 \uplus R_1 \uplus \cdots \uplus R_q
\]
with $|R_0| = 2n^{q}$ and $|R_i| = n^{q}$ for each $i = 1, \ldots, q$.

\begin{figure}[ht!]
  \centering
  \begin{tikzpicture}[xscale=1.2]
    \def\n{3}
    \foreach \i in {1, ..., \inteval{\n*\n}}
\node[filled vertex,fill=violet] (r2\i) at (\i-.25, 0) {};

    \foreach[count=\a from 0] \i in {1, ..., \n} {
      \begin{scope}[xshift={\n cm*\a}]
\foreach \j in {1, ..., \n} 
        \node[filled vertex,fill=violet!65] (r1\i\j) at (\j-.25, 3) {};
        
        \foreach \j in {1, ..., \inteval{\n*2}} {
          \node[empty vertex] (l\i\j) at (\j/2, 2) {};

          \draw[thin] (l\i\j) -- ++(0, 2) node[filled vertex,fill=violet!30] {};

          \foreach \k in {1, ..., \n} 
          \draw[thin] (l\i\j) -- (r1\i\k);
          \foreach \k in {1, ..., \inteval{\n*\n}}
          \draw[ultra thin] (l\i\j) -- (r2\k);
        }
      \end{scope}
    }
\node at (0, 2) {$L$};
    \node[violet!30] at (0, 4) {$R_{0}$};
    \node[violet!65] at (0, 3) {$R_{1}$};
    \node[violet] at (0, 0) {$R_{2}$};
  \end{tikzpicture}
  \caption{The graph~$B_{2}$ with~$n = 3$.
  }
  \label{fig:construction}
\end{figure}

Define the edges as follows. First, add all edges between $L$ and $R_q$, and set $\mathcal{L}_q = \{L\}$.
For each $i = q-1, \ldots, 1$, obtain $\mathcal{L}_i$ from $\mathcal{L}_{i+1}$ by evenly refining every part into $n$ subparts; thus $\mathcal{L}_i$ is a partition of $L$ into $n^{q-i}$ parts, each of size $2n^{i}$. Independently, partition $R_i$ into $n^{q-i}$ parts of equal size, and for each $j$, make the bipartite subgraph between the $j$th part of $\mathcal{L}_i$ and the $j$th part of this partition of $R_i$ complete.
Thus every vertex in a part of $\mathcal{L}_i$ has the same neighbors in $R_i$.
Finally, index the vertices of $L$ and $R_0$ as $\{1,\dots,2n^{q}\}$, and for each $i = 1,\dots,2n^{q}$, add an edge between the $i$th vertex of $L$ and the $i$th vertex of $R_0$.
See Figure~\ref{fig:construction} for an illustration.

In summary, $|V(B_{q})| = (q + 4) n^{q}$, $|E(B_{q})| = 2 n^{q} \frac{n^{q+1} - 1}{n-1} $, and the degree of a vertex~$v\in V(B_{q})$ is
\[
  d(v) =
  \begin{cases}
    \frac{n^{q+1} - 1}{n-1}  & v\in L,
    \\
    1 & v\in R_0,
    \\
    2 n^i  & v\in R_{i}, i = 1, \ldots, q.
  \end{cases}
\]
The set~$L$ is the only minimum vertex cover of~$B_{q}$: the subgraph induced by~$L\cup R_{0}$ is an induced matching.
\begin{proposition}\label{lem:vertex-cover}
  $\tau(B_{q}) = 2 n^{q}$.  The cardinality of a vertex cover of~$B_{q}$ that does not contain~$L$ is at least~$3 n^{q}$.
\end{proposition}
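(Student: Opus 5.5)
Both claims follow from one observation: $B_q$ contains the induced matching $M$ formed by the $2n^{q}$ edges between $L$ and $R_0$. Any vertex cover must contain at least one endpoint of each edge of $M$, and these endpoints are pairwise distinct. Hence every vertex cover has at least $|M| = 2n^{q}$ vertices. Conversely, $L$ is a vertex cover, since every edge of the bipartite graph $B_q$ has exactly one endpoint in $L$. Together these give $\tau(B_q) = 2n^{q}$.

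For the second claim, let $C$ be a vertex cover that does not contain $L$, and fix a vertex $x \in L \setminus C$. Every edge at $x$ must be covered from the $R$ side, so $N(x) \subseteq C$. By construction, $x$ is adjacent to all of $R_q$, which alone contributes $n^{q}$ vertices to $C$ that lie outside $L \cup R_0$. Separately, each of the $2n^{q}$ matching edges contributes a distinct vertex of $L \cup R_0$ to $C$. These two groups of vertices are disjoint, so
\[
  |C| \ge |C \cap (L \cup R_0)| + |C \cap R_q| \ge 2n^{q} + n^{q} = 3n^{q}.
\]

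The only point that needs checking is that $x \in L$ is indeed adjacent to every vertex of $R_q$. This holds because the construction adds all edges between $L$ and $R_q$. One could strengthen the bound by also counting $x$'s neighbours in $R_1, \ldots, R_{q-1}$, but the statement does not require it.
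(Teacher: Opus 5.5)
Your proof is correct. The first claim is argued exactly as in the paper: the $2n^{q}$ vertex-disjoint edges between $L$ and $R_0$ give the lower bound, and $L$ is a cover because $B_q$ is bipartite.

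For the second claim you take a different route from the paper, and yours is the sounder one. You use that a vertex $x\in L\setminus C$ forces all of $N(x)\supseteq R_q$ into $C$. These $n^{q}$ vertices are disjoint from the $2n^{q}$ matching endpoints in $L\cup R_0$, since $q\ge 1$; it is worth saying that explicitly.

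The paper instead uses only the $R_0$-partner of the missing vertex. It then asserts that ``repeating this argument'' puts all of $R_0$ and at least $n^{q}$ vertices of $L$ into $C$. Neither intermediate claim holds in general, even though the final bound $|C|\ge 3n^{q}$ is true. For instance, $(L\setminus\{u\})\cup N(u)$ is a cover containing only one vertex of $R_0$. Likewise $R$ is a cover containing no vertex of $L$.

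The partner argument by itself can never beat the matching bound $2n^{q}$. The extra $n^{q}$ really comes from the large neighbourhood of $x$ outside $L\cup R_0$, which is exactly what your argument uses.
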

\begin{proof}
  The edges between $L$ and $R_0$ form an induced matching of size $2n^q$. Any vertex cover must contain at least one endpoint of each such edge, so $\tau(B_q) \ge 2n^q$. Taking $L$ itself yields a vertex cover of size $2n^q$, so $\tau(B_q) = 2n^q$.

  If a vertex cover $C$ does not contain all vertices of $L$, then there exists $u\in L\setminus C$. The edge between $u$ and its unique neighbor in $R_0$ forces that neighbor into $C$. Repeating this argument shows that $C$ must contain all of $R_0$ and at least $n^q$ vertices of $L$, so $|C| \ge 3n^q$.
\end{proof}

We will show that~$\tauplus(B_{n}) = \Omega(n^{n+1})$.
In the rest, we assume that~$n \ge 7$.

Since~$N(v_{2}) \subsetneq L = N(v_{1})$ for any~$v_{1}\in R_{q}$ and~$v_{2}\in R\setminus R_{q}$, a minimum sum vertex cover of~$B_{q}$ will not choose any vertex in~$R\setminus R_{q}$ before~$R_{q}$.  In other words, before the first vertex in~$R\setminus R_{q}$, at least one of~$L$ and~$R_{q}$ has been exhausted.  As we will see, a minimum sum vertex cover will start with~$R_{q}$.
After that, the remaining graph~$B_{q} - R_{q}$ is decomposed into~$n$ components, each is a copy of~$B_{q-1}$.  For a positive integer~$p$ and a graph~$G$, we use~$p G$ to denote the disjoint union of~$p$ copies of~$G$.

\begin{observation}\label{lem:subgraph}
  The subgraph~$B_{q} - R_{q}$ is isomorphic to~$n B_{q-1}$.
\end{observation}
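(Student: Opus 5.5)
We verify the isomorphism directly from the recursive definition of~$B_{q}$, by exhibiting the $n$ components and matching their layers with those of~$B_{q-1}$.

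\emph{The components.} Consider the partition~$\mathcal{L}_{q-1}$ of~$L$ into~$n$ parts $L^{1},\ldots,L^{n}$, each of size~$2n^{q-1}$. For~$j=1,\ldots,n$, let the $j$th candidate component consist of~$L^{j}$, the matching partners of~$L^{j}$ in~$R_0$, and, for each~$i=1,\ldots,q-1$, those parts of the partition of~$R_i$ that are attached to parts of~$\mathcal{L}_i$ contained in~$L^{j}$. Because~$\mathcal{L}_i$ refines~$\mathcal{L}_{q-1}$ for every~$i\le q-1$, each part of~$\mathcal{L}_i$ lies inside exactly one~$L^{j}$. Hence every vertex of~$R\setminus R_q$ has all its neighbours inside a single~$L^{j}$: a vertex of~$R_0$ has one neighbour, and a vertex of~$R_i$ with~$i\le q-1$ is adjacent exactly to one part of~$\mathcal{L}_i$. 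Since~$B_q$ is bipartite and the only edges leaving these $n$ vertex sets are the edges to~$R_q$, deleting~$R_q$ leaves no edges between different sets. So~$B_q-R_q$ is the disjoint union of the $n$ induced subgraphs~$G_j$.

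\emph{Each~$G_j$ is a copy of~$B_{q-1}$.} First we count layers in~$G_j$. Its left side~$L^{j}$ has size~$2n^{q-1}$, and its~$R_0$-part has size~$2n^{q-1}$ and is joined to~$L^{j}$ by a perfect matching. For~$1\le i\le q-1$, $R_i$ is split into~$n^{q-i}$ equal parts of size~$n^{i}$, and exactly~$n^{q-1-i}$ of them are attached inside~$L^{j}$. This gives~$n^{q-1}$ vertices of~$R_i$ in~$G_j$, matching~$|R_i|=n^{q-1}$ in~$B_{q-1}$.

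Next we match the edges layer by layer. For~$i=q-1$, the single part of~$R_{q-1}$ attached to~$L^{j}$ is completely joined to~$L^{j}$; this plays the role of the top layer of~$B_{q-1}$, whose~$R_{q-1}$ is joined to all of~$L$. For~$i<q-1$, the restriction of~$\mathcal{L}_i$ to~$L^{j}$ is obtained from~$\{L^{j}\}$ by successive even $n$-fold refinements. It therefore has~$n^{q-1-i}$ parts of size~$2n^{i}$, each completely joined to its own part of size~$n^{i}$ in~$R_i$, which is exactly the rule defining~$B_{q-1}$. A bijection that respects these part correspondences, together with the matching on~$R_0$, is then an isomorphism~$G_j\cong B_{q-1}$.

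The only point needing care is consistency of the indexing. The partitions of~$R_i$ are chosen independently of~$\mathcal{L}_i$, but the $j$th part of~$R_i$ is paired with the $j$th part of~$\mathcal{L}_i$. So the component structure is determined by the refinement structure of the~$\mathcal{L}_i$ alone, and the indexing within each~$G_j$ can be re-labelled freely. With that observed, the remaining checks are routine counting.
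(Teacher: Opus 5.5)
Your proof is correct, and it is exactly the direct check from the construction that the paper leaves implicit: the paper states this as an observation without proof, relying on the fact that $\mathcal{L}_{q-1}$ splits $L$ into $n$ blocks to which every vertex of $R_0,\dots,R_{q-1}$ attaches. The one step you could make more explicit is that a single bijection $L^j\to L(B_{q-1})$ respecting all the nested partitions at once exists; it is built top-down along the two isomorphic refinement trees, and the statement implicitly needs $q\ge 2$ so that $B_{q-1}$ is defined.
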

As we have seen in Figure~\ref{fig:disjoint}, these copies may behave differently in a minimum sum vertex cover.  We show that in each of the copies, 
\begin{enumerate}
\item the vertices in~$L$ are selected first (which cover all edges);
\item all vertices of~$R_{q-1}$ are selected first, which reduces the remaining copy into~$n$ copies of~$B_{q-2}$; or
\item vertices of a subset of~$L\cup R_{q-1}$ are selected first. 
\end{enumerate}
The main technical lemma is that the third case will never happen in a minimum sum vertex cover.
Let~$G_{p, q} = p B_{q}$.
For~$i = 1, \ldots, p$, we use~$B^{i}_{q}$ to refer to the~$i$th copy of~$B_{q}$ in~$G_{p, q}$, with vertex set~$L^{i}$,~$R_{0}^{i}$, $\ldots$,~$R_{q}^{i}$.  For a solution~$\sigma$ of~$G_{p, q}$, let~$\sigma^{i} = \sigma|_{V(B^{i}_{q})}$, the sub-ordering of~$\sigma$ induced by~$V(B^{i}_{q})$.

We next prove that in each copy, an optimal solution cannot mix a proper prefix of~$R_{q}$ with later vertices in any beneficial way. This forces each copy to begin either with all of~$R_{q}$ or with all of~$L$.

\begin{lemma}\label{lem:main}
  Let~$\sigma$ be an optimal solution for~$G_{p, q}$.  For all~$i = 1, \ldots, p$, the subsequence~$\sigma^{i}$ starts with either~$R_q^i$ or~$L^i$.
\end{lemma}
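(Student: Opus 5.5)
We argue by contradiction. Fix an optimal $\sigma$ for $G_{p,q}$ and a copy $B^i_q$ such that $\sigma^i$ starts with neither $R^i_q$ nor $L^i$. The first step is to pin down what $\sigma^i$ looks like before its first vertex outside $L^i\cup R^i_q$. Every $v\in R^i\setminus R^i_q$ has $N(v)\subsetneq L^i = N(w)$ for $w\in R^i_q$, so a local exchange shows that no such $v$ precedes the exhaustion of $L^i$ or of $R^i_q$. The exchange swaps $v$ with the next unused vertex $w$ of $R^i_q$ and applies Corollary~\ref{cor:switching}: at $v$'s position the swap strictly increases the coverage, and at the intermediate positions it can only decrease coverage, since $w$ dominates $v$.

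So $\sigma^i$ begins with a sequence drawn from $L^i\cup R^i_q$ until one of the two sets is used up. By assumption, the first vertex is from one set and the sequence switches to the other before that set is exhausted.

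Next we exploit the structure of this prefix. Suppose $\sigma^i$ has taken $a$ vertices of $L^i$ and $b$ vertices of $R^i_q$ so far. A further vertex of $R^i_q$ then covers $2n^q-a$ edges. A further vertex $u\in L^i$ covers its remaining degree, which is at least $d(u)-b$ plus contributions from $R^i_0,\dots,R^i_{q-1}$ that are unaffected by $R^i_q$. By Lemma~\ref{lem:fix position}, the effective coverages along $\sigma$ are non-increasing. Within a single copy, the coverage of an $R_q$-vertex is strictly decreasing in $a$, and the coverage of an $L$-vertex is strictly decreasing in $b$ (it is at least $(n^{q+1}-1)/(n-1)-b$).

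The key claim is a convexity or ``all-or-nothing'' argument: once one vertex of the other type has been inserted, it is profitable to continue with that type. We pick the last point where the prefix switches type, say from $R^i_q$ to $L^i$ while $R^i_q$ is not exhausted, and compare two modifications:
\begin{enumerate}
\item move the remaining $R^i_q$ vertices that come later up to before the $L^i$ block;
\item push the earlier $R^i_q$ vertices to after the block.
\end{enumerate}
Costs are linear in positions, and the relevant coverage differences are determined by the counts $a,b$ (for $L$-vertices we also need the sub-partition structure, but the $R_q$ edges enter symmetrically). Hence the change in cost under the two modifications consists of two quantities with opposite signs, up to a strictly positive cross term, which yields a strict improvement for at least one of them. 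Each modification is realized by a sequence of swaps, each verified via Corollary~\ref{cor:switching} or, more globally, by Proposition~\ref{lem:bettersort} applied with a suitable $t$. Throughout, vertices of other copies $B^j_q$ are left in place; their coverages are unchanged, since the copies are disjoint.

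The main obstacle is the interleaving with other copies. Moving vertices of $B^i_q$ shifts the positions of vertices from other copies, so the coverage sequence of $\sigma$ changes at many positions and need not remain monotone in the pattern Proposition~\ref{lem:bettersort} requires. To handle this, we first plan to use Observation~\ref{lem:remaining} to delete the prefix before the first mixed vertex. We then restrict attention to the window between the first and last vertex of the mixed prefix of $\sigma^i$, and perform the exchange as a sequence of adjacent-type swaps. For each swap, condition~2 of Corollary~\ref{cor:switching} must be checked. For positions occupied by other copies, coverage is unchanged. For positions occupied by $B^i_q$, we need the numeric gap between $L$-degrees of order $n^q$ and $R_q$-degrees of $2n^q$, where the assumption $n\ge 7$ should make the inequalities strict. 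Verifying these inequalities is where the real work lies.
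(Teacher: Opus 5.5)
Your instinct that a mixed start has to be ruled out by comparing a forward and a backward modification, rather than by a one-sided improvement, is right. But the decisive computation is missing, and the way you propose to carry it out cannot work. Corollary~\ref{cor:switching} and Proposition~\ref{lem:bettersort} certify one-sided strict improvements, and here no single direction improves in general. Take one copy whose ordering is $h$ vertices of $R_q$ followed by all of $L$. Its cost is $-n^q h^2+2n^q\frac{n^q-1}{n-1}\,h$ plus a constant, a concave function of $h$ maximized at $h=\frac{n^q-1}{n-1}$. So pulling one more $R_q$-vertex forward is locally worse for small $h$, and pushing one back is locally worse for large $h$. Moreover, a prefix vertex of $R_q$ has the maximum coverage $2n^q$, so condition~1 of Corollary~\ref{cor:switching} can never hold when it is moved back. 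Your modifications therefore cannot be realized as chains of swaps each certified by Corollary~\ref{cor:switching}, and degree inequalities with $n\ge 7$ are beside the point. The claimed ``opposite signs up to a strictly positive cross term'' is never derived.

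The missing idea is an exact identity for single-vertex rotations. Suppose $\sigma^i$ is $w_1,\dots,w_h$ ($0<h<n^q$, vertices of $R^i_q$), then all of $L^i=\{u_1,\dots,u_{2n^q}\}$, then the rest. Then $w_{h+1}$ has coverage $0$ and may be taken last. Let $P=\sigma(w_h)$ and let $E_{>P}$ be the set of edges covered after position $P$. Moving $w_{h+1}$ to position $P+1$ delays every edge of $E_{>P}$ by one, except the $2n^q$ edges at $w_{h+1}$, which are now paid at $P+1$; the change is $|E_{>P}|-\sum_j(\sigma(u_j)-P)$. Moving $w_h$ to the end advances every edge of $E_{>P}$ by one, while the edges at $w_h$ are now paid at $\sigma(u_j)-1$; the change is $\sum_j(\sigma(u_j)-P-1)-|E_{>P}|$. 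The two changes sum to $-2n^q$ (for a single copy, this is the second difference of the quadratic above), so one rotation strictly improves $\sigma$. This is the paper's rotation argument; there $\sigma$ is normalized so that $|E_{>P}|=|E(G_{p,q})|-2n^qP$. Each rotation shifts a whole suffix by exactly one, so the other copies enter only through $|E_{>P}|$, which cancels. The interleaving you single out as the main obstacle simply disappears.

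That identity needs exactly the structure just assumed, and your proposal does not establish it; your first step is not correct as stated. Swapping $v\in R^i\setminus R^i_q$ with a later $w\in R^i_q$ is strict at $v$'s position only while some vertex of $L^i\setminus N(v)$ is still untaken. Once all of $L^i\setminus N(v)$ lies in the taken set $A$, both $v$ and $w$ cover $|L^i\setminus A|$ edges, and condition~1 of Corollary~\ref{cor:switching} fails. The swap is safe when no vertex of $L^i$ precedes $v$ (Case~1 of Lemma~\ref{lem:prefix-1}); the other case needs a different argument.

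The paper uses a block replacement. If $h<n^q$ and the $(h+1)$st vertex of $\sigma^i$ is $u_1\in L^i$, put $L^i$ into the positions of the $(h+1)$st through $(h+2n^q)$th vertices of $\sigma^i$. Each of these positions then has coverage $\frac{n^{q+1}-1}{n-1}-h=r_\sigma(u_1)$, which dominates the old coverage there by Lemma~\ref{lem:fix position}, and all later vertices of the copy get coverage $0$. Suppose those $h+2n^q<3n^q$ vertices were not already $\{w_1,\dots,w_h\}\cup L^i$. Then they miss a vertex of $L^i$, so by Proposition~\ref{lem:vertex-cover} they do not cover $B^i_q$. Hence the next vertex of $\sigma^i$ had positive coverage, and Proposition~\ref{lem:bettersort} gives a strict improvement. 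This single step rules out alternating prefixes, an $L^i$-first start followed by $R^i_q$-vertices, and early vertices of $R^i\setminus R^i_q$. Your ``pick the last switch point, say from $R^i_q$ to $L^i$'' addresses none of these.
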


Before presenting the proof of Lemma~\ref{lem:main}, we use it to prove Theorem~\ref{thm:msvc}.
We start with bounds on the cost of optimal solutions of~$G_{p, q}$, denoted $\phi(G_{p, q})$.  While the lower bound is nearly trivial, we will upper-bound by explicitly constructing a solution and computing its cost.
Note that the number of edges in $G_{p, q}$ is
\begin{equation}
  \label{eq:1}
 |E(G_{p, q})| = 2 p n^q \sum_{i=0}^{q} n^i = 2 p n^q \frac{n^{q + 1} - 1}{n - 1}.
\end{equation}

\begin{lemma}\label{eq:msvc_cost_bound}
  $p^2 n^{3 q - 1} (n + 2) < \phi(G_{p, q}) < p^2 n^{3 q - 1} (n + 6)$.
\end{lemma}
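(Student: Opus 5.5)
The two bounds come from different arguments. The lower bound is a degree-counting argument. The upper bound comes from evaluating one explicit ordering.

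\emph{Lower bound.} Let $\Delta$ be the maximum degree of $G_{p,q}$. From the degree list of $B_q$, $\Delta = 2n^{q}$, attained by the vertices of $R_q$, since $\frac{n^{q+1}-1}{n-1} < 2n^q$ for $n \ge 2$. In any ordering $\sigma$ we have $r_\sigma(i) \le \Delta$ for every $i$, and $\sum_i r_\sigma(i) = m := |E(G_{p,q})|$. Among nonnegative sequences bounded by $\Delta$ with total $m$, $\sum_i i\, r_\sigma(i)$ is smallest when the mass is packed into the earliest positions, i.e.\ $\Delta$ at each of the first $\lfloor m/\Delta\rfloor$ positions and the remainder at the next one. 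This exchange argument is in the spirit of Proposition~\ref{lem:bettersort}. Hence
\[
  \phi(G_{p,q}) \ge \frac{m^2}{2\Delta} + \frac{m}{2} > \frac{m^2}{2\Delta}.
\]
By \eqref{eq:1}, $m/\Delta = k := p\frac{n^{q+1}-1}{n-1} \ge p(n^{q}+n^{q-1})$. Therefore
\[
  \frac{m^2}{2\Delta} = n^q k^2 \ge p^2 n^{3q-2}(n+1)^2 > p^2 n^{3q-1}(n+2).
\]

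\emph{Upper bound.} We use the layered ordering suggested by Observation~\ref{lem:subgraph}. First take $R_q^i$ for all copies $i$. This leaves $pn$ copies of $B_{q-1}$. Then take all of $R_{q-1}$ in every copy, and continue in this way down to $R_1$. Finally take all of $L$, which covers the remaining matching edges to $R_0$.

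In phase $j$ ($j = 0, \ldots, q-1$) we select the $pn^q$ vertices of $R_{q-j}$ across all copies. Each of them has exactly $2n^{q-j}$ uncovered edges, because its neighbourhood lies in $L$ and nothing from $L$ has been chosen yet. The phase is preceded by $t_j = jpn^q$ vertices. So phase $j$ costs
\[
  2n^{q-j}\Bigl(pn^q\, t_j + \tfrac{pn^q(pn^q+1)}{2}\Bigr)
  = p^2 n^{3q-j}(2j+1) + p n^{2q-j}.
\]
The final phase costs $\sum_{i=1}^{2pn^q} (qpn^q + i) = O(q\,p^2 n^{2q})$.

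Summing, the cost is
\[
  p^2 n^{3q}\sum_{j\ge 0} \frac{2j+1}{n^j} + \text{lower order terms}
  = p^2 n^{3q-1}\Bigl(n + 3 + \frac{5}{n} + \cdots\Bigr) + O\bigl(q p^2 n^{2q} + p n^{2q}\bigr).
\]
The main obstacle is purely arithmetic. We need the tail $\sum_{j\ge 2}(2j+1)n^{1-j}$, the $pn^{2q-j}$ terms, and the $L$-phase term to total less than $3p^2n^{3q-1}$. For the tail, bound it by an explicit geometric series, which is below $1$ for $n \ge 7$. The $L$-phase term is $O(q p^2 n^{2q})$, and since $q \le n$ and $3q - 1 \ge 2q + 1$ for $q \ge 2$, it is at most a small fraction of $p^2 n^{3q-1}$. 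The case $q = 1$ can be checked directly. These estimates together give $\phi(G_{p,q}) < p^2 n^{3q-1}(n+6)$.
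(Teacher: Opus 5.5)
Your proof is correct. The lower bound is the paper's argument: pack $t=p\frac{n^{q+1}-1}{n-1}$ terms equal to the maximum degree $2n^q$ at the front. You also check that $2n^q$ really is the maximum degree, which the paper takes for granted.

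For the upper bound, the paper inducts on $q$. It takes $R_q$ in all copies, follows with an optimal solution of $G_{pn,q-1}$, and plugs in the hypothesis $\phi(G_{pn,q-1})<p^2n^{3q-2}(n+6)$. After dividing by $p^2n^{3q-1}$, it only has to check one inequality of the form $3+\frac9n+\frac{n^{1-q}}{p}<6$. The $L$-phase then appears only in the base case $q=1$.

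You instead evaluate the fully layered ordering $R_q,\dots,R_1,L$ in closed form; unrolled, the paper's induction is bounding this same ordering. This forces you to handle a series with growing coefficients and an explicit $L$-phase for every $q$. In return, it exposes the true leading term $p^2n^{3q-1}\bigl(n+3+O(1/n)\bigr)$ and shows how much slack the constant $6$ has.

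One estimate needs tightening. Your justification for the $L$-phase term goes via $q\le n$ and $n^{2q}\le n^{3q-2}$. That only gives $2(q+1)p^2n^{2q}\le 2(1+\frac1n)p^2n^{3q-1}$, which is not a small fraction. Together with a tail bound close to $1$, it would exceed your budget of $3p^2n^{3q-1}$.

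Instead, write $2(q+1)p^2n^{2q}=\frac{2q+2}{n^{q-1}}\,p^2n^{3q-1}$. The ratio $\frac{2q+2}{n^{q-1}}$ decreases in $q$, so the term is at most $\frac6n\,p^2n^{3q-1}$ for $q\ge 2$. The remaining pieces are bounded as follows:
the tail is at most $\frac{25}{4n}$, since consecutive terms $(2j+1)n^{1-j}$ shrink by a factor at most $\frac{7}{5n}\le\frac15$;
the sum $\sum_{j}pn^{2q-j}$ is at most $\frac{1}{n-1}\,p^2n^{3q-1}$ for $q\ge2$.
For $n\ge 7$ the total excess is then below $2p^2n^{3q-1}$, which suffices. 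The case $q=1$ is exactly the paper's base-case computation.
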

\begin{proof}
  For the lower bound, suppose that~$\sigma$ is an optimal solution of~$G_{p, q}$.  Since the maximum degree is~$2 n^{q}$, we have~$r_{\sigma}(i) \le 2 n^{q}$ for all~$i = 1, \ldots, |V(G_{p, q})|$.
  Let~$t = p \frac{n^{q + 1} - 1}{n - 1}$.
  Among all non-increasing sequences summing to~$|E(G_{p, q})|$ and bounded above by~$2 n^{q}$, the sequence that starts with as many terms equal to~$2 n^{q}$ as possible minimizes the weighted sum by Proposition~\ref{lem:bettersort}.  Thus,
  \[
    \phi(G_{p, q}) >
    \quad \sum_{i = 1}^{\mathclap{|V(G_{p, q})|}} i \cdot s'_{i} = 
    \sum_{i = 1}^{t} i \cdot 2 n^q = t(t+1) n^q
    > t^2 n^q > p^2 (n^q + n^{q-1})^2 n^q > p^2 n^{3 q - 1} (n + 2),
  \]
  where the third inequality holds because~$\frac{n^{q+1} - 1}{n - 1} > n^q + n^{q-1}$.
  
  For the upper bound, we use an induction on~$q$.
  In the base case,~$q = 1$.  Note that~$L^{1}\cup L^{2}\cup \cdots\cup L^{p}$ is a vertex cover of~$G_{p, q}$.  The cost of a solution starting with~$\langle R_{1}^{1}, R_{1}^{2}, \ldots, R_{1}^{p}, L^{1}, L^{2}, \ldots, L^{p} \rangle$ is
\[
    \sum_{i=1}^{p n} i \cdot 2n + \sum_{i=p n+1}^{3p n} i \cdot 1
    = (p^2n^3 + p n^2) + (4 p^2n^2 + p n)
    < p^2 n^2 (n+6).
  \]
  
  In the general case, $q \geq 2$.  We consider the solution~$\sigma$ starting with~$R_{q}^{1}, R_{q}^{2}, \ldots, R_{q}^{p}$, followed by an optimal solution for~$ G_{p n, q - 1}$.  Its cost is
  \begin{align*}
    \sum_{i=1}^{4 p n^{q}} i\cdot r_{\sigma}(i) =& \sum_{i=1}^{p n^{q}} i \cdot 2 n^{q} + \sum_{i=p n^{q}+1}^{4 p n^{q}} i \cdot r_{\sigma}(i)
    \\
    =& \frac{p n^q(p n^q+1)}{2} 2 n^q +
         		\phi \left( G_{p n, q-1} \right) + p n^q  | E \left( G_{p n, q-1} \right) |
    \\
    <& p^2 n^{3q} + p n^{2 q} + p^2n^{3q-2}(n+6) + 2p^2n^{3q-1} +3p^2n^{3q-2}
    \\
    =& p^2 n^{3q} + p n^{2 q} + 3p^2n^{3q-1} + 9p^2n^{3q-2}
    \\
    <& p^2 n^{3 q-1} (n+6),
  \end{align*}
  where the first inequality follows from the induction hypothesis, and the term~$p n^q  | E \left( G_{p n, q-1} \right) |$ accounts for the shift in positions when appending the optimal solution of $G_{p n, q-1}$ after the first $p n^q$ vertices.
  This concludes the proof.
\end{proof}	

\begin{lemma} \label{eq:msvc_multiparts}
  For any~$p \geq n$ and~$q > 1$, it holds $\tauplus(G_{p, q}) > p n^q + \tauplus\left( G_{p (n - 5), q - 1} \right)$.
\end{lemma}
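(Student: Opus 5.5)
Take an optimal solution $\sigma$ of $G_{p,q}$ attaining $\tauplus(G_{p,q})$. By Lemma~\ref{lem:main}, each sub-ordering $\sigma^{i}$ starts with $R_q^i$ or with $L^i$. Call copy $i$ an \emph{$R$-copy} in the first case and an \emph{$L$-copy} in the second. Once $L^i$ is taken, every edge of $B_q^i$ is covered. Once $R_q^i$ is taken, Observation~\ref{lem:subgraph} says the remainder of $B_q^i$ is $n$ copies of $B_{q-1}$.

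The plan has three steps.
\begin{enumerate}
\item Show that at least $p-5$ copies are $R$-copies (and in fact all of them should be).
\item Show that, at the moment an $R$-copy has exhausted $R_q^i$, the optimal solution has already taken all $R_q^{j}$ of the other $R$-copies.
\item Apply Observation~\ref{lem:remaining} to the suffix and use monotonicity of $\tauplus$ in the number of components.
\end{enumerate}

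\textbf{Step 1.} Suppose more than five copies are $L$-copies, and let $j$ be the one whose first vertex appears latest. Take its first $L$-vertex and swap it with the first vertex of $R_q^{j}$ that would follow, or more simply replace the $L^j$ block by $R_q^j$ followed by the remaining structure. Compare costs with Corollary~\ref{cor:switching} or Proposition~\ref{lem:bettersort}. A vertex of $R_q^j$ covers $2n^q$ edges. A vertex of $L^j$ covers at most $\frac{n^{q+1}-1}{n-1}\approx n^q(1+1/n)$ edges, and this drops as $R$-vertices are removed. So while fewer than about half of $L^j$ has been consumed, an $R_q$ vertex is strictly more profitable.

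I would make this quantitative with the cost estimates of Lemma~\ref{eq:msvc_cost_bound}. Compute the cost of the best solution in which $s$ copies are $L$-copies, and compare it with the all-$R$ solution used in the upper bound of that lemma. If $s \ge 6$ copies start with $L$, the lower-bound-style argument gives
\[
\text{cost} \;\ge\; p^2 n^{3q-1}(n+6),
\]
which contradicts the upper bound. The constant $5$ in $n-5$ together with $p\ge n$ suggests the counting is applied per block of $n$ components, with slack of five components.

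The more natural reading, which I would aim for, is this. After all $R_q$'s of the $R$-copies are removed we obtain $p' n$ copies of $B_{q-1}$ with $p'\ge p-5$. Among these, the number that can again behave badly is controlled. Since $p' n \ge (p-5)n \ge p(n-5)$ when $p\ge n$, we get at least $p(n-5)$ copies of $B_{q-1}$.

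\textbf{Steps 2 and 3.} Every vertex of $R_q^i$ has degree $2n^q$. This exceeds the degree of any vertex in $B_{q-1}$ copies, which is at most $\frac{n^{q}-1}{n-1}$ for $L$-vertices and $2n^{q-1}$ for vertices of $R_{q-1}$. Hence, by Lemma~\ref{lem:fix position}, all $R$-copies' $R_q$ vertices precede all their other vertices, contributing $(p-5)n^q$ positions, or $p n^q$ if all copies are $R$-copies. The $L$-copies are interleaved with effective coverage below $2n^q$.

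The suffix is then an optimal solution of a graph containing $G_{p(n-5),q-1}$ as a union of components, possibly plus leftover $L$-copy parts. Components that the suffix finishes with their $L$ side contribute nothing, so it remains to argue that this disjoint union has $\tauplus$ at least $\tauplus(G_{p(n-5),q-1})$. The $+pn^q$ versus $(p-5)n^q$ discrepancy must be absorbed by showing all $p$ copies are $R$-copies. Strictness comes from counting the extra cover vertices from the leftover copies.

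\textbf{Main obstacle.} The hardest step is Step 1: ruling out, or bounding, $L$-copies via a cost comparison tight enough to beat the $n^{3q-1}\cdot 4$ gap in Lemma~\ref{eq:msvc_cost_bound}. I would first try a direct exchange: move $R_q^j$ of an $L$-copy to immediately before $L^j$, and check the conditions of Corollary~\ref{cor:switching} position by position.
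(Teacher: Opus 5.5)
Your Step~1 aims at a bound that the available estimates cannot give. Let $a$ copies start with $R_q$ and $s=p-a$ start with $L$. Then Lemma~\ref{lem:main} and Lemma~\ref{lem:fix position} force the order: all $R_q$-blocks (coverage $2n^q$), then all $L$-blocks (coverage $\frac{n^{q+1}-1}{n-1}$), then an optimal solution of $G_{an,q-1}$ delayed by $(2p-a)n^q$ positions. Summing these pieces and using Lemma~\ref{eq:msvc_cost_bound} gives a lower bound on the cost. It exceeds the upper bound $p^2n^{3q}+3p^2n^{3q-1}+9p^2n^{3q-2}+pn^{2q}$ for the all-$R$ solution only by $s^2n^{3q-1}(n-1)-5p^2n^{3q-2}-4spn^{3q-2}-pn^{2q}$. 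This is positive only when $s$ is at least about $\sqrt5\,p/n$. Against the weaker bound $p^2n^{3q-1}(n+6)$ you propose, you would even need $s$ of order $p/\sqrt n$. Since $p$ is arbitrary (and grows along the recursion, $p=n(n-5)^j$), $s\ge 6$ yields no contradiction, so ``at most five $L$-copies'' does not follow. ``All copies are $R$-copies'' is likewise something the paper neither proves nor needs; Figure~\ref{fig:disjoint} shows identical components need not be treated alike. What the comparison does give is $a>\lfloor p-4p/n\rfloor\ge p-5p/n$, where $p\ge n$ pays for the floor. Hence $an>p(n-5)$, which is exactly the number of $B_{q-1}$ copies you need.

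The second missing point is that the $L$-copies do not hurt the count. Each $L$-copy puts all $2n^q$ vertices of $L^i$ into $S_\sigma$. So the prefix before the $G_{an,q-1}$ part has $an^q+2(p-a)n^q=(2p-a)n^q\ge pn^q$ vertices, not $an^q$. Thus $\tauplus(G_{p,q})=(2p-a)n^q+\tauplus(G_{an,q-1})>pn^q+\tauplus(G_{p(n-5),q-1})$ follows directly from $an>p(n-5)$, together with the monotonicity of $\tauplus$ in the number of copies that you and the paper both use. There is no $(p-5)n^q$ versus $pn^q$ discrepancy to absorb. Your Steps~2 and~3 match the paper's structure. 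Replace Step~1 with the explicit cost comparison giving the fractional bound $p-a<5p/n$, and count the $L$-vertices of the $L$-copies in the cover.
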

\begin{proof}
  Let~$\sigma$ be an optimal solution.
  By Lemma~\ref{lem:main}, there exists a number~$a\in \{0, \ldots, p\}$ such that~$\sigma$ starts with
  \begin{itemize}
  \item all vertices in $R_{q}^1 \cup R_{q}^2 \cup \cdots \cup R_{q}^a$, with effective coverage~$2 n^{q}$; and
  \item all vertices in $L^{a+1}, L^{a+2}, \dots, L^{p}$, with effective coverage~$\frac{n^{q+1} - 1}{n - 1}$,
  \end{itemize}
  followed by an optimal solution of~$G_{a n, q - 1}$.

  We split the cost of~$\sigma$ into two parts accordingly.
  The cost of the first~$(2 p - a)n^q$ vertices is
  \begin{align*}
    & \sum_{i=1}^{a n^q} i \cdot 2n^q + \sum_{i = a n^q + 1}^{(2 p - a)n^q} i \cdot \frac{n^{q+1} - 1}{n - 1}
    \\
    = & \frac{a n^q ( a n^q+1)}{2} \cdot 2 n^q + \frac{(2p-2a)n^q(2 p n^q+1)}{2} \frac{n^{q+1} - 1}{n - 1}
    \\
    > &  a^2n^{3q} + {(p - a)n^q(2 p n^q)} (n^q + n^{q-1})
    \\
    = & a^2n^{3q} + 2 p^2 n^{3 q} - 2 a p n^{3 q} + 2 p^2 n^{3 q -1} - 2 a p n^{3 q-1}.
  \end{align*}
  The suffix is an optimal solution of~$G_{a n, q-1}$, together with an additional cost~$(2 p - a) n^q | E(G_{a n, q-1}) |$ because every one of its covered edges is delayed by~$(2 p - a) n^q$ positions.
  Since
  \[
    | E(G_{a n, q-1}) | = 2 a n^q \frac{n^q - 1}{n-1},
  \]
  and by the lower bound established in Lemma~\ref{eq:msvc_cost_bound}, the cost of the rest is
  \begin{align*}
    &\phi(G_{a n, q-1}) + 2 a n^q \frac{n^q - 1}{n-1} \cdot (2 p - a) n^q
    \\
    > & a^2 n^{3q-2} (n+2) + 2 a n^q (n^{q-1} + n^{q-2}) \cdot (2 p-a)n^q 
    \\
    = & a^2 n^{3 q-1} + 2 a^2 n^{3 q -2} + 4 a p n^{3 q-1} - 2 a^2 n^{3 q-1} + 4 a p n^{3 q-2}- 2 a^2 n^{3 q-2}.
  \end{align*}
  The total cost of~$\sigma$ is thus greater than
  \[
     p^2 n^{3q} + (p-a)^2 n^{3q-1}(n-1) + 3 p^2 n^{3q-1} + 4 a p n^{3q-2}.
  \]

  On the other hand, by Lemma~\ref{eq:msvc_cost_bound}, the solution that takes all the sets~$R_{q}^{1}\cup \cdots \cup R_{q}^{p}$ and then an optimal solution of~$G_{p n, q - 1}$ has a cost smaller than \[
    p^2 n^{3 q} + 3p^2 n^{3 q - 1} + 9p^2 n^{3 q - 2} + p n^{2 q}.
  \]
  Thus, we have
  \[
    a > \left\lfloor p - \frac{4 p}{n} \right\rfloor \ge p - \frac{5  p}{n},
  \]
  which means that only fewer than $5p/n$ copies can start with~$L$, so at least $p(n-5)/n$ copies start with~$R_{q}$.
  Therefore,
  \[
    \tauplus(G_{p, q}) 
    = (2 p - a)n^q + \tauplus\left(  G_{a n, q-1} \right) > p n^q + \tauplus\left(  G_{p (n-5), q - 1} \right).\qedhere
  \]
\end{proof}	
        
Finally, we are now ready to show the bound on $\tauplus(B_n)$.
\begin{lemma}\label{lem:msvc-bound}
  $\tauplus(B_{n}) = \Omega(n^{n+1})$. \end{lemma}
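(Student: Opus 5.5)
We unroll the recursion of Lemma~\ref{eq:msvc_multiparts} starting from $B_n = G_{1,n}$. The only obstacle is the hypothesis $p \ge n$ in that lemma, since the starting value is $p=1$. So the first step handles the top level directly.

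For the top level we invoke Lemma~\ref{lem:main} with $p=1$. It says an optimal solution of $B_n$ starts with either $R_n$ or $L$. If it starts with $L$, the cost is $\sum_{i=1}^{2n^n} i\cdot \frac{n^{n+1}-1}{n-1}$, which is roughly $2n^{3n}(1+1/n)$. Lemma~\ref{eq:msvc_cost_bound} with $p=1$, $q=n$ shows the cost is at most $n^{3n-1}(n+6)=n^{3n}(1+6/n)$. This is smaller than the cost of starting with $L$ once $n\ge 7$. Hence every optimal solution starts with $R_n$. After that, Observation~\ref{lem:remaining} and Observation~\ref{lem:subgraph} show it continues with an optimal solution of $G_{n,n-1}$, which gives
\[
\tauplus(B_n) \ge n^n + \tauplus(G_{n,n-1}).
\]

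Now $p=n$ satisfies the hypothesis $p\ge n$. Write $p_0=n$ and $p_{j+1}=p_j(n-5)$, and apply Lemma~\ref{eq:msvc_multiparts} repeatedly at levels $q=n-1,n-2,\dots,2$. At step $j$ the parameter is $p_j = n(n-5)^j \ge n$, so the hypothesis keeps holding. Telescoping gives
\[
\tauplus(B_n) \;>\; n^n + \sum_{j=0}^{n-3} p_j\, n^{\,n-1-j} \;=\; n^n + \sum_{j=0}^{n-3} n^{n-j}(n-5)^j \;=\; n^n\Bigl(1+\sum_{j=0}^{n-3}\bigl(1-\tfrac5n\bigr)^j\Bigr).
\]
The recursion stops at $q=1$, where $\tauplus \ge 0$ trivially.

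The remaining step is to bound the geometric-type sum from below, and it is routine. For $j\le n/5$ we have $(1-5/n)^j\ge (1-5/n)^{n/5}\ge c$ for an absolute constant $c>0$; for instance $c=e^{-1}\cdot(1-5/n)$ works for $n\ge 7$ once we use $(1-x)^{1/x}\ge e^{-1}(1-x)$. So the sum is at least $c\cdot \lfloor n/5\rfloor=\Omega(n)$, which gives $\tauplus(B_n)=\Omega(n^{n+1})$.

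Alternatively, one could note that the sum equals $\frac{n}{5}\bigl(1-(1-5/n)^{n-2}\bigr)\ge \frac n5(1-e^{-5(n-2)/n})$, which is $\Omega(n)$ for $n\ge 7$. I expect no real obstacle beyond checking the top-level comparison in the first step.
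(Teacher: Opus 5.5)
Your proof is correct and follows the same route as the paper: rule out an optimal solution starting with $L$ at the top level by comparing its exact cost with the bound $\phi(B_n) < n^{3n-1}(n+6)$ from Lemma~\ref{eq:msvc_cost_bound}, then telescope Lemma~\ref{eq:msvc_multiparts} with $p_j = n(n-5)^j$ and bound the resulting geometric sum. Your version is slightly more careful than the paper's in three places. You correctly invoke Lemma~\ref{lem:main} with $p=1$ for the top level, whereas the paper cites Lemma~\ref{eq:msvc_multiparts}, whose hypothesis $p\ge n$ fails there. The paper's recomputation of the $R_n$-first cost drops the shift term $n^n|E(G_{n,n-1})|$. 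Finally, your estimate of $\sum_{j=0}^{n-3}(1-5/n)^j$ is valid for every $n\ge 7$, while the paper's step $(n-2)(1-5/n)^n \ge \frac{n}{2} e^{-5}$ holds only for sufficiently large $n$.
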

\begin{proof}
The cost of the solution starting with all vertices in~$L$ is
  \[
    \sum_{i=1}^{2n^n} i \cdot \frac{n^{n+1} - 1}{n - 1} > 2n^{3n} + 2n^{3n-1}.
  \]
  By Lemma~\ref{eq:msvc_cost_bound}, the cost of a solution starting with~$R_n$ followed by an optimal solution of~$G_{n, n-1}$ is
  \[
    \sum_{i=1}^{n^n} i \cdot 2n^n + \phi(G_{n, n-1}) < n^{3n}+ n^{2n} + n^{3n-2}(n+6) < n^{3n} + 2n^{3n-1}.
  \]

  By Lemma~\ref{eq:msvc_multiparts}, any optimal solution of~$B_{n}$ must fall into the second case, and hence
  \begin{align*}
    \tauplus(B_n) &= n^n + \tauplus(nG_{n-1})
    \\
                  &> 2n^n + \tauplus \left(  G_{n(n-5), n-2} \right)
    \\
                  &> 2n^n + n^{n-1}(n-5) + \tauplus \left( G_{n(n-5)^2, n-3} \right)
    \\
                  &> n^n + n^n \left[ 1 + \left( 1-\frac{5}{n} \right) + \left( 1-\frac{5}{n} \right)^2 + \cdots + \left( 1-\frac{5}{n} \right)^{n-3} \right] + \tauplus \left(  G_{n(n-5)^{n-2}, 1} \right).
\end{align*}
Since $n\ge 7$,
\[
  \sum_{j=0}^{n-3} \left(1-\frac{5}{n}\right)^j \ge (n-2) \left(1-\frac{5}{n}\right)^{n}
  \ge \frac{n}{2} e^{-5}.
\]
Therefore, $\tauplus(B_n) \ge c\, n^{n+1}$ for some constant $c>0$, e.g., $c = e^{-5}/2$.
  This concludes the proof.
\end{proof}

Theorem~\ref{thm:msvc} follows from Proposition~\ref{lem:vertex-cover} and Lemma~\ref{lem:msvc-bound}.
The rest of this section is devoted to the proof of Lemma~\ref{lem:main}, which we do in two steps.

\begin{lemma}\label{lem:prefix-1}
  Let~$\sigma$ be an optimal solution for~$G_{p, q}$.  For all~$i = 1, \ldots, p$, the subsequence~$\sigma^{i}$ starts with either~$R_q^i$ or a proper (possibly empty) subset of~$R_q^i$ followed by~$L^i$.
\end{lemma}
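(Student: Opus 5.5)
We argue by contradiction via exchanges. Fix an optimal solution $\sigma$ of $G_{p,q}$ and a copy $B^i_q$. Within this copy, let $x$ be the first vertex of $\sigma^i$ that lies outside $R^i_q$. We must show that $x$ exists only if $\sigma^i$ has not yet exhausted $R^i_q$, and that in that case $x$ together with the remaining prefix of $\sigma^i$ is exactly $L^i$. That is, the copy proceeds through some vertices of $R^i_q$ (possibly all of them), then all of $L^i$, with no vertex of $R^i_0\cup\cdots\cup R^i_{q-1}$ interleaved before $L^i$ is finished. If $R^i_q$ is exhausted before any other vertex of the copy, we are in the first case, so assume it is not. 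Some $w\in R^i_q$ then remains unchosen when $x$ is chosen.

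\textbf{Step 1: $x\in L^i$.} Suppose instead $x\in R^i_j$ with $j<q$. At that moment no vertex of $L^i$ has been chosen, since all earlier vertices of the copy lie in $R^i_q$. Hence $x$ covers $|N(x)|=2n^j$ edges, while $w$ would cover $|L^i|=2n^q$. Swap $x$ with $w$ and apply Corollary~\ref{cor:switching}. At the position of $x$ the coverage strictly increases. For positions strictly between, the only vertices whose coverage can change are those of $L^i$, since vertices of other copies are unaffected. But there are no vertices of $L^i$ there: $N(x)\subseteq N(w)=L^i$, so any such $\ell$ was adjacent to $w$ before and still is. Hence coverage at those positions can only decrease or stay the same. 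At $w$'s old position the new vertex $x$ covers nothing new, because $N(x)\subseteq N(w)$. The hypotheses of the corollary hold, contradicting optimality, so $x\in L^i$.

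\textbf{Step 2: $L^i$ is finished before any vertex of $R^i_0\cup\cdots\cup R^i_{q-1}$.} After $x\in L^i$ is chosen, suppose some $y\in R^i_j$ ($j<q$) appears while some $\ell\in L^i$ remains unchosen. Take $y$ to be the first such vertex. Its coverage $r_\sigma(y)$ is the number of unchosen $L^i$-neighbours of $y$, which is at most $2n^j\le 2n^{q-1}$. Meanwhile $\ell$ at that point still covers its $R^i_0$ edge plus its edges to the unchosen vertices among $R^i_1,\dots,R^i_q$. Vertices of $L^i$ have nested neighbourhood structure; I would pick $\ell$ in the same part of $\mathcal L_j$ as an unchosen neighbour of $y$, if one exists. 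If none exists, $y$ covers $0$, which contradicts the standing assumption $r_\sigma>0$. Swapping $y$ and $\ell$ moves $\ell$ earlier. I expect $\ell$'s coverage to exceed $y$'s: $\ell$ covers the pendant edge, and the edge $\ell y$ is still uncovered. This should give the strict gain at position $p$ in Corollary~\ref{cor:switching}.

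\textbf{Main obstacle.} The delicate part is checking condition 2 of the corollary for the positions between $y$ and $\ell$. Intermediate $R$-vertices adjacent to $\ell$ lose one covered edge, which is fine, since a decrease is allowed. The troublesome case is $y$ at its new late position: it may still cover some edges, namely its unchosen $L^i$-neighbours other than $\ell$ that survive until then. To handle this, I would choose $\ell$ to be the last-chosen unchosen neighbour of $y$ in $L^i$, so that $y$'s coverage at $\ell$'s old position is $0$. If that fails, I would fall back to the direct comparison of Proposition~\ref{lem:bettersort} on the modified coverage sequences, using the monotonicity from Lemma~\ref{lem:fix position}.
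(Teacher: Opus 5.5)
Your Step~1 is correct and matches the paper's first case. There is one slip: vertices of $L^i$ \emph{can} lie between $x$ and $w$. Condition~2 still holds because each such $\ell$ loses the edge $\ell w$ and gains at most the edge $\ell x$.

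The proof breaks after Step~1, in two ways. First, your reformulated target is weaker than the lemma. The second alternative requires the $2n^q$ vertices of $\sigma^i$ right after the $R^i_q$-prefix to be exactly $L^i$. So you must also exclude vertices of $R^i_q$ interleaved in the $L^i$-block, and Step~2 never does. This also breaks Step~2 itself. If $R^i_q$ is exhausted after some $L^i$-vertices, an unchosen $\ell\in L^i$ covers only $\frac{n^q-1}{n-1}<2n^{q-1}$, while $y\in R^i_{q-1}$ may cover $2n^{q-1}$. Then swapping $y$ and $\ell$ loses at $y$'s position.

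Second, even without interleaving, Step~2 remains an expectation rather than an argument. Counting does not give the strict gain at $y$'s position: $\ell$ covers $\frac{n^{q+1}-1}{n-1}-h$, which is below $2n^{q-1}$ when $h$ is close to $n^q$. Lemma~\ref{lem:fix position} gives $r_\sigma(y)\le r_\sigma(u_1)=\frac{n^{q+1}-1}{n-1}-h$, but not strictly. Condition~2 of Corollary~\ref{cor:switching} also genuinely fails. Every $L^i$-neighbour of $y$ lying between $y$ and $\ell$ gains one unit of coverage after the swap, since it now covers its edge to $y$. Choosing $\ell$ as the last-chosen neighbour of $y$ maximizes the number of such vertices, so your proposed remedy makes this worse. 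The fallback to Proposition~\ref{lem:bettersort} is left unspecified.

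The missing idea is to compare against a block replacement rather than a single swap. Let $h<n^q$ be the length of the $R^i_q$-prefix of $\sigma^i$; by Step~1 its $(h+1)$st vertex $u_1$ lies in $L^i$. Suppose the $(h+1)$st through $(h+2n^q)$th vertices of $\sigma^i$ are not exactly $L^i$. Place the vertices of $L^i$ in these positions and the rest of the copy afterwards. Then:
\begin{itemize}
\item Other copies are unaffected.
\item Each replaced position now has coverage exactly $\frac{n^{q+1}-1}{n-1}-h=r_\sigma(u_1)$, which dominates the old value there by Lemma~\ref{lem:fix position}.
\item After these positions, the copy contributes nothing.
\item The first $h+2n^q<3n^q$ vertices of $\sigma^i$ miss a vertex of $L^i$, so by Proposition~\ref{lem:vertex-cover} they do not cover $B^i_q$. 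Hence $\sigma$ has strictly larger coverage at some later position.
\end{itemize}
Now apply the computation in the proof of Proposition~\ref{lem:bettersort}, with $t$ the last replaced position. That computation uses only the sign pattern of the differences, not monotonicity, and it yields a strictly cheaper ordering.

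This is essentially the paper's argument. Note that it excludes interleaved vertices of $R^i_q$ and of $R^i_0\cup\cdots\cup R^i_{q-1}$ simultaneously. The paper additionally removes interleaved $R^i_q$-vertices by swapping one with $u_1$. The inequality that swap needs at the vertex's old position is not automatic, so the block replacement is the step to rely on.
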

\begin{proof}
Let~$v$ be the first vertex in~$\sigma$ from~$V(B^{i}_{q})\setminus (L^i\cup R_q^i)$, and let~$u$ be the first vertex in~$\sigma$ from~$L^{i}$.
  
  In the first case, $v<_{\sigma} u$ (i.e., $v<_{\sigma} x$ for all~$x\in L^i$).
  We argue that~$x<_{\sigma} v$ for all~$x\in R_q^i$; i.e.,~$\sigma^{i}$ starts with~$R_q^i$.
  Suppose for contradiction that there exists a vertex~$w \in R_{q}^i$ such that~$\sigma(v) < \sigma(w)$.  
  Let~$\sigma'$ be the ordering of~$G_{p, q}$ obtained from~$\sigma$ by switching~$v$ and~$w$.
  Since~$N(v) \subsetneq L^i = N(w)$, we have
  \[
    r_{\sigma'}(\sigma(v)) = n^{q} > r_\sigma(\sigma(v)), 
  \]
  and $r_{\sigma'}(j) \leq r_\sigma(j)$ for all~$j$ with~$\sigma(v) < j \le \sigma(w)$.
It follows from Corollary~\ref{cor:switching} that~$\sigma'$ is strictly better than~$\sigma$, contradicting the optimality of~$\sigma$.
	
  In the second case, $u<_{\sigma} v$.  Let
  \begin{align*}
    h &= \sigma^i(u) - 1,
    \\
    k &= \sigma^i(v) - h.
  \end{align*}
  By the selection of~$u$ and~$v$, all the first~$h$ vertices in~$\sigma^i$ are from~$R_{q}^{i}$; number them as~$w_{1}, \ldots, w_{h}$ in order.  We note that the next~$k$ vertices (from~$u$, inclusive, to~$v$, exclusive) in~$\sigma^i$ are from~$L^{i}$.  If there is a vertex~$x\not\in L^{i}$ with~$u  <_\sigma x <_\sigma v$, then~$x\in R_{q}^{i}$ by the selection of~$v$, and switching~$x$ with~$u$ gives a strictly better solution~$\sigma'$ than~$\sigma$ by Corollary~\ref{cor:switching}: $r_{\sigma'}(\sigma(u)) = 2 n^{q} > r_\sigma(\sigma(u))$ and $r_{\sigma'}(j) \leq r_\sigma(j)$ for all~$j$ with~$\sigma(u) < j \le \sigma(x)$.
  It remains to show that
  \[
    h = n^{q} \quad\text{or}\quad k = 2 n^{q}.
  \]
  
  Suppose that~$h <  n^{q}$ and~$k < 2 n^{q}$.
Let~$\sigma'$ be obtained from~$\sigma$ by replacing the~$(h+1)$st through~$(h + 2 n^{q})$th in~$\sigma^{i}$ with the vertices in~$L^{i}$.
  Since~$L^{i}$ covers all edges in~$B_{q}^{i}$, the order of the remaining vertices (i.e., $R^{i}\setminus \{w_{1}, \ldots, w_{h}\}$) is immaterial.

  Number the vertices in~$L^{i}$ as~$u_{1}, \ldots, u_{2 n^{q}}$ in the order of their occurrence in~$\sigma'$, and let~$t = \sigma'(u_{2 n^{q}})$.
Note that~$\sigma'(u_{k+1}) = \sigma(v)$.
  We have
  \[
    r_{\sigma'}(\sigma'(u_{j})) = \frac{n^{q+1} - 1}{n - 1} - h
    = r_\sigma(u_{1})
    \geq r_\sigma(\sigma'(u_{j})), j = k+1, \ldots, 2 n^{q}.
  \]
  On the other hand, $r_{\sigma'}(x) = 0$ for all~$x\in R^{i}\setminus \{w_{1}, \ldots, w_{h}\}$, which come after~$u_{2 n^{q}}$ in~$\sigma'$, and hence for all~$j > t$,
  \[
    r_{\sigma'}(j) \leq r_\sigma(j).
  \]
  Let~$X$ be the set of the first~$(h + 2 n^{q})$ vertices in~$\sigma^{i}$, and~$y$ the~$(h + 2 n^{q} +1)$st vertex in~$\sigma^{i}$; note that~$\sigma(y) > t$.
  Since~$h < n^{q}$, the set~$X$ is not a vertex cover of~$B_{q}^{i}$ by Proposition~\ref{lem:vertex-cover}.  Thus, $r_\sigma(\sigma(y)) > 0 = r_{\sigma'}(\sigma(y))$, and $\sigma'$ is a strictly better solution than $\sigma$ by Proposition~\ref{lem:bettersort}, contradicting the optimality of~$\sigma$.  
                
  Therefore, either~$h = n^{q}$, i.e., the subsequence~$\sigma^{i}$ starts with~$R_q^i$, or~$h < n^{q}$ and~$k = 2 n^{q}$, i.e., it starts with a proper subset of~$R_q^i$ followed by~$L^i$.
\end{proof}
	
We are now ready for the proof of Lemma~\ref{lem:main}: in the second case of Lemma~\ref{lem:prefix-1}, the proper subset of~$R_q^i$ must be empty.

\begin{proof}[Proof of Lemma~\ref{lem:main}]
In view of Lemma~\ref{lem:prefix-1}, we need to show that for all~$i = 1, \ldots, p$, the subset of~$R_q^i$ must be empty in the second case.
  Our strategy is to assume that some copy starts with a nonempty proper prefix of~$R_q^i$, and then locally reorder the boundary between this prefix and the subsequent block of~$L^{i}$-vertices.  By comparing the change in edge costs under these swaps/rotations, we show that one obtains a strictly better solution, contradicting optimality.

  We number the vertices in~$L^{i}$ as~$u_{1}^{i}, \ldots, u_{2 n^{q}}^{i}$, and the vertices in~$R_{q}^{i}$ as~$w_{1}^{i}, \ldots, w_{n^{q}}^{i}$, by their occurrence in~$\sigma$; let~$h_{i}$ denote the length of the longest prefix that is a subset of~$R_{q}^{i}$, and~$\ell^{i}$ the number of consecutive vertices from~$L^{i}$ afterward.

  Suppose for contradiction that~$\sigma^{1}$ starts with a proper and nonempty subset of~$R_q^1$ followed by~$L^1$ (we can always renumber the copies of~$B_{q}$).

  We show that at most one copy can behave in this way; otherwise we can locally swap and improve the solution.
  In the first case, there exists another~$b\in \{1, \ldots, p\}$ such that~$\sigma^{b}$ starts with a proper and nonempty subset of~$R_q^b$ followed by~$L^b$.  Again, assume without loss of generality that~$b = 2$ and~$h_{1} \le h_{2}$.
Hence, $r_\sigma(u_{i}^{1}) \geq r_\sigma(u_{j}^{2})$ for all $i, j \in \{1, \dots, 2n^q\}$.
  Since~$r_\sigma(w_{i}^{1}) \geq r_\sigma(w_{j}^{2}) = 2 n^{q}$ for all $i, j \in \{1, \dots, n^q\}$, we may assume that the following vertices in this order in $\sigma$:
  \[
    w_{1}^{1}, \ldots, w_{h_{1}}^{1},
    w_{1}^{2}, \ldots, w_{h_{2}}^{2},
    u_{1}^{1}, \ldots, u_{2 n^{q}}^{1},
    u_{1}^{2}, \ldots, u_{2 n^{q}}^{2}.
  \]
  Let~$\sigma'$ be obtained from~$\sigma$ by swapping $w_{h_{1}}^{1}$ and $w_{h_{2}+1}^{1}$.  Note that~$r_{\sigma'}(w_{h_{2}+1}^{1}) = r_{\sigma}(w_{h_{1}}^{1}) = 2 n^{q}$ and~$r_{\sigma}(w_{h_{2}+1}^{1}) = r_{\sigma'}(w_{h_{1}}^{1}) = 0$.
For $i = 1, \dots, 2n^{q}$, we have
  \begin{align*}
    r_{\sigma'}(u_{i}^{1}) = r_{\sigma}(u_{i}^{1}) + 1 > r_\sigma(u_{i}^{1}),
    \\
    r_{\sigma'}(u_{i}^{2}) = r_{\sigma}(u_{i}^{2}) - 1 < r_\sigma(u_{i}^{2}).
  \end{align*}
    Let $t = \sigma'(u_{2 n^{q}}^{1}) = \sigma(u_{2 n^{q}}^{1})$.
    Finally,~$r_{\sigma}(j) = r_{\sigma'}(j) = 0$ for all~$j >\sigma(u_{2 n^{q}}^{2}) = t + 2 n^{q}$.
    Thus, the solution~$\sigma'$ is a strictly better solution than $\sigma$ by Proposition~\ref{lem:bettersort}, contradicting the optimality of~$\sigma$.  

    In the rest, for all~$i = 2, \ldots, p$, the subsequence~$\sigma^{i}$ starts with either~$R_{q}^{i}$ or~$L^{i}$.
    We may renumber the copies such that~$\sigma^{i}$ starts with~$R_{q}^{i}$ when~$i = 2, \ldots, a$ and~$L^{i}$ when~$i = a+1, \ldots, p$.
  Hence, $\sigma$ must start with the following in order
  \begin{itemize}
  \item all vertices in~$R_{q}^2 \cup R_{q}^3 \cup \cdots \cup R_{q}^{a} \cup \{ w_{1}^{1}, \ldots, w_{h_{1}}^{1} \}$, each vertex in it with effective coverage~$2 n^{q}$;
  \item all vertices in $L^{a+1}, L^{a+2}, \dots, L^{p}$, each vertex in it with effective coverage~$\frac{n^{q+1} - 1}{n - 1}$; and
  \item all vertices in $L^{1}$, each vertex in it with effective coverage~$\frac{n^{q+1} - 1}{n - 1} - h_{1}$.
  \end{itemize}
We may assume that $w_{h_{1}}^{1}$ is the last vertex in~$\sigma$ with effective coverage~$2 n^{q}$ and~$w_{h_{1}+1}^{1}$ is the last of~$\sigma$, i.e., 
  \begin{align*}
    \sigma(w_{h_{1}}^{1}) =& (a - 1) n^{q} + h_{1},
    \\
    \sigma(w_{h_{1}+1}^{1}) =& p(q+4)n^q.
  \end{align*}
  Thus,~$\sigma$ has the following structure:
  \[
     \underbrace{\big[ R_{q}^2, \cdots, R_{q}^{a} \big]}_{\text{full $R_{q}$-blocks}},
     \underbrace{\big[ w_{1}^{1}, \cdots, w_{h_{1}}^{1} \big]}_{\text{partial prefix}},
     \underbrace{\big[  L^{a+1}, L^{a+2}, \dots, L^{p} \big]}_{\text{full $L$-blocks}},
     \underbrace{\big[ u_{1}^{1}, \cdots, u_{2 n^{q}}^{1} \big]}_{L^{1}}, \quad\cdots,\quad w_{h_{1}+1}^{1}.
  \]

  We now construct two new solutions $\overrightarrow{\sigma}$ and $\overleftarrow{\sigma}$ by rotating the subsequence between them (both inclusive) of $\sigma$ by one position.
  Define~$\overleftarrow{\sigma}$ by moving~$w_{h_{1}}^{1}$ to the very end of the sequence, shifting all vertices after~$(a - 1) n^{q} + h_{1}$ one position to the left, and~$\overrightarrow{\sigma}$ by moving~$w_{h_{1}+1}^{1}$ to the position $(a - 1) n^{q} + h_{1} + 1$, shifting all other vertices after~$(a - 1) n^{q} + h_{1}$ one position to the right.
  We show that at least one of~$\overrightarrow{\sigma}$ or~$\overleftarrow{\sigma}$ is a better solution than~$\sigma$.

  In the transformation from~$\sigma$ to~$\overrightarrow{\sigma}$, the costs of the edges between~$w_{h_{1}+1}^{1}$ and~$L^{1}$ become $\overrightarrow{\sigma}(w_{h_{1}+1}^{1})= (a - 1) n^{q} + h_{1} + 1$, while every edge that was originally covered after position~$(a - 1) n^{q} + h_{1}$ and is not incident to~$w_{h_{1}+1}^{1}$ has its cost increased by~1.
  Thus, the cost increases by
  \begin{align*}
    &\sum_{i = (a - 1) n^{q} + h_{1} + 1}^{|V(G_{p, q})|} \left( r_\sigma(i) - r_{\overrightarrow{\sigma}}(w_{h_{1}+1}^{1})  \right) - \sum_{i = 1}^{2n^{q}} \left( \sigma(u_i^{1}) - \overrightarrow{\sigma}(w_{h_{1}+1}^{1}) \right)
    \\
    =& \left( |E(G_{p, q})| - 2n^q ((a-1) n^q + h_{1}) - 2n^q \right) - \left( \sum_{i = 1}^{2 n^q} \left( \sigma(u_i^{1}) - \sigma(w_{h_{1}}^{1}) \right) - 2n^q \right).
    \\
    =& |E(G_{p, q})| - 2n^{q} ((a-1) n^q + h_{1}) - \sum_{i = 1}^{2n^b} \left( \sigma(u_i^{1}) - \sigma(w_{h_{1}}^{1}) \right).
  \end{align*}
	
  In the transformation from~$\sigma$ to~$\overleftarrow{\sigma}$, the cost of the edge between~$w_{h_{1}+1}^{1}$ and~$u_{i}^{1}, i= 1, \ldots, 2 n^{q}$, becomes $\overleftarrow{\sigma}(u_{i}^{1})$, while every edge that was originally covered after position~$(a - 1) n^{q} + h_{1}$ has its cost decreased by~1.
  Thus, the cost increases by
  \begin{align*}
    &\sum_{i = 1}^{2n^q} \left( \overleftarrow{\sigma}(u_i^{1}) - \sigma(w_{h_{1}}^{1}) \right) - \sum_{i = \sigma(w_{h_{1}}^{1}) + 1}^{|V(G_{p, q})|} r_\sigma(i) 
    \\
    =& \sum_{i = 1}^{2n^q} \left( \sigma(u_i^{1}) - 1 - \sigma(w_{h_{1}}^{1}) \right) - \left( |E(G_{p, q})| - 2n^q ((a-1) n^q + h_{1}) \right).
  \end{align*}

  Since the summation of these two terms is~$-2n^q$, at least one of them is negative.  Hence, at least one of~$\overrightarrow{\sigma}$ and~$\overleftarrow{\sigma}$ is strictly better than~$\sigma$, contradicting the optimality of~$\sigma$.
  This concludes the proof. 
\end{proof}

\section{A parameterized algorithm}

We present a simple fixed-parameter branching algorithm for the \mssc{} problem on bounded-rank hypergraphs.  
Recall that the rank~$r(H)$ of a hypergraph~$H$ is the maximum cardinality of any of the hyperedges in~$H$, and we consider the decision version: given a hypergraph $H$ and integers $k$ and $w$, does there exist a set cover~$S$ of~$H$ and a permutation of~$S$ such that~$|S| \le k$ and the cost is at most~$w$.

We say that a solution~$\sigma$ of an instance~$(H, w, k)$ is \emph{feasible} if~$|S_\sigma| \le k$ and its total cost is at most~$w$.
For the minimum sum vertex cover problem, the set of any feasible solution must contain all vertices with degree greater than~$k$.  This is no longer true in our setting. Instead, we require the Sunflower Lemma~\cite{erdos-60-sunflower}.

A \emph{sunflower} in a hypergraph~$H$ consists of a (possibly empty) subset~$K \subseteq V(H)$, called the \emph{core}, and a subset~$P \subseteq E(H)$, called the \emph{petals}, such that the intersection of every pair of distinct hyperedges in~$P$ is exactly~$K$.  The size of the sunflower is~$|P|$.

\begin{lemma}[Sunflower Lemma~\cite{erdos-60-sunflower}]
  If a hypergraph~$H$ with no repeated edges has more than~$r(H) \cdot r(H)! (k - 1)^{r(H)}$ hyperedges, then $H$ contains a sunflower with $k$ petals, and such a sunflower can be computed in time polynomial in $|E(H)|$, $|V(H)|$, and $k$.
\end{lemma}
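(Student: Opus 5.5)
The plan is the classical Erdős–Rado induction on the rank $r = r(H)$, with $m = |E(H)|$. We prove that a simple hypergraph of rank at most $r$ with more than $r!\,(k-1)^r$ edges already contains a sunflower with $k$ petals. This is at least as strong as the stated threshold $r\cdot r!\,(k-1)^r$. Each step of the induction is constructive, which will give the algorithm.

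\emph{Base case.} Take $r=1$. All edges are distinct singletons, so any $k$ of them form a sunflower with empty core. We need $m \ge k$, which follows from $m > k-1$.

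\emph{Inductive step, first choice.} Greedily pick a maximal family $\mathcal{M}$ of pairwise disjoint edges. Scan the edges once, adding an edge whenever it is disjoint from all edges chosen so far. If $|\mathcal{M}| \ge k$, then any $k$ edges of $\mathcal{M}$ form a sunflower with empty core, and we stop.

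\emph{Inductive step, second choice.} Otherwise $|\mathcal{M}| \le k-1$. Let $U = \bigcup \mathcal{M}$, so $|U| \le r(k-1)$. By maximality, every edge meets $U$. By pigeonhole, some vertex $x\in U$ lies in at least
\[
\frac{m}{r(k-1)} > (r-1)!\,(k-1)^{r-1}
\]
edges.

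\emph{Recursing on the link.} Consider the link hypergraph $H_x = \{e\setminus\{x\} : x\in e\in E(H)\}$. It has rank at most $r-1$. It has no repeated edges, because $e\mapsto e\setminus\{x\}$ is injective on edges containing $x$. One small issue is that $\{x\}$ itself may be an edge, which gives the empty set in the link. I handle it by discarding that one edge. This loses only one from the count, and the bound still holds after tightening the pigeonhole using the integer slack. Alternatively, note that an empty link edge can never be a petal when $k\ge2$, since it would force the core to be empty and the other petals to be disjoint from nothing, which is acceptable; so I would just treat it by hand.

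By induction, $H_x$ has a sunflower with core $K'$ and $k$ petals. Adding $x$ back to every petal gives a sunflower in $H$ with core $K'\cup\{x\}$ and $k$ petals, since pairwise intersections simply gain $x$.

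\emph{Running time.} Each recursion level does one greedy pass and one degree count, both polynomial in $|E(H)|$ and $|V(H)|$. There are at most $r$ levels, so the whole procedure is polynomial in $|E(H)|$, $|V(H)|$, and $k$.

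\emph{Main obstacle.} The only delicate step is the counting when passing to the link. We must make sure the degree bound strictly exceeds the inductive threshold $(r-1)!(k-1)^{r-1}$. We must also handle the degenerate empty-set edge and keep the link free of repeated edges. The stated threshold, with its extra factor $r$, gives ample slack for all of these.
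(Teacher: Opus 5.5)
The paper gives no proof of this lemma; it cites Erd\H{o}s and Rado. Your induction is the classical argument behind that citation: a greedy maximal disjoint family, pigeonhole on $U$, recursion on the link $H_x$, and lifting by adding $x$. It is constructive, with polynomially many steps.

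\textbf{The gap.} The one step that fails as written is how you dispose of a singleton edge $\{x\}$. You set up the induction with threshold $T(r)=r!\,(k-1)^r$. Integrality then only gives $d(x)\ge\lceil (T(r)+1)/(r(k-1))\rceil=T(r-1)+1$. If $\{x\}\in E(H)$ and you discard $\emptyset$ from $H_x$, you may be left with exactly $T(r-1)$ link edges. That is not more than $T(r-1)$, so the induction hypothesis does not apply.

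The extra factor $r$ in the stated threshold does not rescue this. You use that factor only at the top level, but empty link edges can appear at every depth: an edge $\{x,y\}$ becomes $\{y\}$ in $H_x$ and then $\emptyset$ in the next link. Your alternative remark is also off, because $\emptyset$ can be a petal, namely of a sunflower with empty core.

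\textbf{Two repairs.} Either of the following closes the gap.

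(a) Run the whole induction with the stated threshold $f(r)=r\cdot r!\,(k-1)^r$. The pigeonhole then gives $d(x)>r!\,(k-1)^{r-1}=f(r-1)+(r-1)!\,(k-1)^{r-1}$. At least one unit of slack survives discarding $\{x\}$, and the base case $f(1)=k-1$ is unchanged.

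(b) Keep $\emptyset$, and prove the $T(r)$ bound for families that may contain it.
\begin{itemize}
\item Under the paper's definition, a sunflower only needs all pairwise intersections to equal $K$. So $\emptyset$ together with pairwise disjoint sets is a sunflower with empty core. It lifts to a sunflower of $H$ with core $\{x\}$ that contains the petal $\{x\}$. This is harmless for the algorithm, since covering that petal forces a core vertex.
\item If $\emptyset$ is present, the greedy pass picks it. So when $|\mathcal{M}|\le k-1$ you have $|U|\le r(k-2)$, and all $m-1$ nonempty sets meet $U$.
\item This gives a vertex of degree at least $(m-1)/(r(k-2))\ge T(r)/(r(k-2))>T(r-1)$.
\item For $k=2$, maximality would force the family to be $\{\emptyset\}$, contradicting $m>T(r)$.
\end{itemize}

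With either fix your argument is complete, including the running-time analysis (one greedy pass and one degree count per level, at most $r$ levels).
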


If there is a sunflower with $k+1$ petals, then any set cover of~$H$ with at most~$k$ vertices must contain at least one vertex from its core~$K$.
Since~$K$ cannot be larger than the size of any petal, we have~$|K| \le r(H)$.  Thus, we can branch on choosing a vertex from~$K$: this is why our algorithm only works on bounded-rank hypergraphs.
We repeat this until there is no sunflower with $k+1$ petals, or~$k$ vertices have already been selected.
By the Sunflower Lemma, the number of edges, and hence the number of vertices, in the remaining instance is bounded.  We can then find the remaining vertices of the solution set by branching, and finally enumerate all possible orderings of this set.

We now summarize the algorithm in Figure~\ref{alg:main} and use it to prove Theorem~\ref{thm:alg}.

\begin{figure}[h!]
  \centering 
  \begin{tikzpicture}
    \path (0,0) node[text width=.85\textwidth, inner xsep=20pt, inner ysep=10pt] (a) {
      \begin{minipage}[t!]{\textwidth}
        \begin{tabbing}
          AAA\=Aaa\=aaa\=Aaa\=MMMMMAAAAAAAAAAAA\=A \kill
          1. \> $S\leftarrow \emptyset$;
          \\
          2. \> \textbf{while} $|E(H - S)| > r \cdot r!(k - |S|)^r$ and $|S| \leq k$ \textbf{do}
          \\
          2.1. \>\> \textbf{if} $|S| = k$ \textbf{then return} ``no'';
          \\
          2.2. \>\> find a sunflower of $H - S$ with core~$K$;
          \\
          2.3. \>\> \textbf{if} $K = \emptyset$ \textbf{then return} ``no'';
          \\
          2.4. \>\> guess a vertex $v \in K$ and add it to~$S$;
          \\
          3. \> remove all isolated vertices in~$H - S$;
          \\
          4. \> guess a subset $T\subseteq V(H - S)$ with $|T| \le k - |S|$ and $S\leftarrow S\cup T$;
          \\
          5. \> guess an bijective mapping $\sigma \colon S \rightarrow \{1,2,\dots, |S|\}$;
          \\
          6. \> \textbf{if} $S$ is a set cover of~$H$ and the cost of~$\sigma$ is~$\le w$ \textbf{then return} ``yes'';
          \\
          \> {\bf else return} ``no.''
        \end{tabbing}
      \end{minipage}
    };
    \draw[draw=gray!60] (a.north west) -- (a.north east) (a.south west) -- (a.south east);
  \end{tikzpicture}
  \caption{The parameterized algorithm for \mssc{}.}
  \label{alg:main}
\end{figure}

\begin{proof}[Proof of Theorem~\ref{thm:alg}]
  We use the algorithm described in Figure~\ref{alg:main}.
  Since the algorithm returns ``yes'' only when a solution is explicitly constructed, it suffices to show that a solution must be returned for every yes-instance.
  Let $S$ denote the set of elements contained in an optimal solution, and let~$r = r(H)$.
  
  The algorithm has two phases: first, finding the vertex set~$S$ (steps~1--4), and then computing an ordering of~$S$ (step~5).
  We find the solution set~$S$ by branching.
  During this process, we work with the hypergraph~$H - S$: all hyperedges intersecting~$S$ have already been covered. 
  If~$H - S$ still has more than~$r \cdot r!(k - |S|)^r$ hyperedges, then by the Sunflower Lemma we can find a sunflower with $k - |S| + 1$ petals~\cite{erdos-60-sunflower}.
  Any solution must contain a vertex from the core of this sunflower; otherwise, it would need to use a distinct vertex to cover each petal, which would require more than~$k$ vertices and thus be infeasible.
  Thus, there cannot be a feasible solution when the core is empty.
  Therefore, either step~2.3 correctly returns ``no,'' or at least one branch in step~2.4 adds a vertex from the optimal solution.

  Once the algorithm exits the loop in step~2, we are allowed to add at most~$k - |S|$ additional vertices, and these can be found by enumeration: we branch over all subsets $T\subseteq V(H - S)$ with $|T|\le k - |S|$.

    We now analyze the running time.
  The search tree in step~2 has depth at most~$k$, and the branching factor is at most the core size, which is bounded by~$r$. Thus, this phase takes time~$O(r^{|S|+1} \cdot r!(k - 1)^r m)$, where $m = |E(H)|$.
  After step~2, we have~$|E(H - S)| \le r \cdot r!(k - |S|)^r$ by the Sunflower Lemma, and hence~$|V(H - S)| \le r |E(H - S)| \le r^2 \cdot r!(k - |S|)^r$ because there are no isolated vertices. Thus, there are~$O\bigl((r^2 \cdot r! (k-1)^r)^{k-|S|}\bigr)$
  subsets to be enumerated in step~4.
  Step~5 takes~$O(k!)$ time. 
Overall, the total running time is~$O\left( r^{r+k} k^r m + (rk)^{rk} \right)$. 
\end{proof}
Let us remark that we have not made an attempt to optimize the running time.  For example, Step~5 can be improved using the Held--Karp dynamic program~\cite{held-62-dp}.

\section{Concluding remarks}\label{sec:conclusion}

The most compelling open question is to close the $\log\log \tau(G)$ gap between the upper bound in Theorem~\ref{thm:mssc-rank} for graphs and the lower bound  given by the construction in Theorem~\ref{thm:msvc}.

We also leave open the fixed-parameter tractability of minimum sum set cover on general hypergraphs when parameterized solely by~$\tauplus$.
Since~$\tauplus$ and~$\tau$ are polynomially related (Theorem~\ref{thm:mssc-rank}), minimum sum vertex cover is fixed-parameter tractable with respect to~$\tau$ \cite{cao-25-msvc}.  For hypergraphs, this is very unlikely to remain true.
In fact, we conjecture that minimum sum set cover is NP-hard even on hypergraphs with bounded minimum set cover number.
One possible approach toward proving this would be to adapt our reduction from Theorem~\ref{thm:main-1}, for which
\[
  \phi(H_{G}) = 3 \phi(G) + \sum_{i = 1}^n 3 \left( 2^{n - i} - (n - i) \right).
\]
However, this reduction produces instances whose size grows exponentially.

\end{document}